\newcommand{\BP}{\ensuremath{\mathbf{P}}}
\newcommand{\Act}{\ensuremath{\mathbf{A}}}
\newcommand{\Ag}{\ensuremath{\mathbf{I}}}
\newcommand{\M}{{M}}
\newcommand{\rel}[1]{\xrightarrow{#1}}
\newcommand{\lr}[1]{\langle #1 \rangle}
\newcommand{\ELKh}{\textsf{ELKh}}
\newcommand{\pspace}{P{\small SPACE}}
\newcommand{\Kh}{{K}h}
\newcommand{\K}{{K}}
\renewcommand{\phi}{\varphi}
\newcommand{\ACT}{\Act}
\newcommand{\Lan}{\ensuremath{\mathcal{L}}}
\newcommand{\CELeafN}{\ensuremath{\mathtt{ECE}}}
\newcommand{\Dom}{\ensuremath{\mathtt{dom}}}
\newcommand{\ruleNeg}{R\neg}
\newcommand{\ruleConj}{R\wedge}
\newcommand{\ruleDisj}{R\vee}
\newcommand{\ruleK}{R\K}
\newcommand{\ruleNegK}{R\neg{\K}}
\newcommand{\ruleT}{Cut\K}
\newcommand{\ruleFour}{R\K 4}
\newcommand{\ruleFive}{R\K 5}
\newcommand{\ruleKhtoKKh}{R\Kh 4}
\newcommand{\ruleKhFive}{R\Kh 5}
\newcommand{\newruleKh}{R{\Kh}}
\newcommand{\ruleNegKh}{R \neg\Kh}
\newcommand{\ruleCutKh}{Cut\ensuremath{\Kh}}
\newcommand{\ruleCutNegK}{Cut\ensuremath{\neg\K}}
\newcommand{\ruleCutNegKh}{Cut\ensuremath{\neg\Kh}}
\newcommand{\subpl}[1]{sub^+(#1)}
\renewcommand{\Vdash}{\vDash}
\newtheorem{theorem}{Theorem}
\newtheorem{definition}[theorem]{Definition}
\newtheorem{lemma}[theorem]{Lemma}
\newtheorem{proposition}[theorem]{Proposition}
\newcommand{\T}{\mathscr{T}}
\newcommand{\LeafNode}{\text{end node}}
\newcommand{\dep}[1]{\textit{dep}(#1)}
\title{Tableaux for the Logic of Strategically Knowing How}
\author{Yanjun Li
\institute{College of Philosophy,
Nankai University, 
Tianjin, China}
%
}
\begin{document}
\maketitle

\begin{abstract}
	The logic of goal-directed \emph{knowing how} proposed in \cite{FervariEtAl2017} extends the standard epistemic logic 
	with an operator of \emph{knowing how}.
	The \emph{knowing how} operator is interpreted as that
	there exists a  strategy such that the agent knows that the strategy can make sure that $\phi$.
	This paper presents a tableau procedure for the multi-agent version of the logic of strategically \emph{knowing how} and shows the soundness and completeness of this tableau procedure.
	This paper also shows that the satisfiability problem of the logic can be decided in \pspace.
\end{abstract}

\section{Introduction}

Epistemic logic proposed by von Wright and Hintikka (see \cite{Wright51,Hintikka:kab}) is a logical formalism for reasoning about knowledge of agents.
It deals with propositional knowledge, that is, the knowledge expressed as \textit{knowing that} $\phi$ is true.
In recent years, other patterns of knowledge besides knowing that are attracting increasing attention in logic community, such as \textit{knowing whether} \cite{hart_knowing_1996,fan_wang_ditmarsch_2015}, \textit{knowing who} \cite{Epstein_Naumov_2021}, \textit{knowing the value} \cite{Baltag16,GuW16}, and \textit{knowing why} \cite{XuWS21} 
(see a survey in  \cite{WangBKT}).
Motivated by different scenarios in philosophy and AI, reasoning about \textit{knowing how} assertions are particularly interesting \cite{sep-knowledge-how}.

The discussion about formalizing the notion of \emph{knowing how} can date back to \cite{McCarthy79,moore1984formal}.
Currently, there are two main approaches of formalizing \emph{knowing how}.
One of them is connecting \emph{knowing how} with logics of \emph{knowing that} and \emph{ability} (see e.g. \cite{lesperance_ability_2000,andreasN2006}).
However, the main difficulty of this approach is that a simple combination of the modalities of \textit{knowing that} and \textit{ability} does not seem to capture a natural understanding of \emph{knowing how} (see a discussion in \cite{JamrogaA07,Herzig15}).
Instead of expressing \emph{knowing how} by \emph{knowing that} and \emph{ability} modalities, the other approach first adopted in \cite{Wang15lori} is expressing \emph{knowing how} in modal languages with a new modality of \emph{knowing how}.

Inspired by 
the idea of automated
planning under uncertainty in artificial intelligence,
The author of \cite{Wang15lori,Wang2018knowingHow} proposed a logical framework of \emph{knowing how} which includes a binary modality of \emph{knowing how}.
In \cite{Areces2021}, a new semantics for this \emph{knowing how} modality was given, which is based on an indistinguishability relation between plans. It is shown in \cite{Areces2021} that the satisfiability problem of this \emph{knowing how} logic is NP-complete, but this logic does not include the modality for  \emph{knowing that}.

Inspired by
the tradition of coalition logic, the authors of \cite{Naumov2017,Naumov2018a} introduced a logic to capture \emph{knowing how} of coalitions.
A coalition $C$ knows how to achieve $\phi$ if and only if there is a joint
action $a$ for $C$ such that it is distributed knowledge for $C$ that doing $a$ can make sure that $\phi$. 
Variants of the basic framework were proposed and discussed. 
In \cite{Naumov2018b}, a logic of \emph{knowing how} under the assumption of perfect recall was studied.
In \cite{NAUMOV201941}, a logic of \emph{knowing how} with the degree of uncertainty was discussed. 
In \cite{Naumov2018}, a logic of second-order \emph{knowing how} was proposed,  for the case that one coalition knows what the joint action of another coalition is. The topic of complexity is not covered in these literatures.

Along with the idea of formalizing \emph{knowing how} based on planning, the authors of \cite{FervariEtAl2017} proposed a single-agent logic of \emph{knowing how} via strategies. A strategy is a partial function from the set of agents' belief states into the set of actions.
The agent knows how to achieve $\phi$ if and only if there is a strategy such that all executions of the strategy will terminate on $\phi$-states.
Besides strategies, there are other types of plans, such as simple plans (i.e. a single action), linear plans (i.e. a sequence of actions), and so on.
The authors of \cite{liWang2021AIJ} proposed a unified logical framework to incorporate logics of \emph{knowing how} via different notions of plans.
They used a PDL-style programming language to syntactically specify various types of plans and discussed ten types of \emph{knowing how} based on ten different notions of plans.
It is shown in \cite{liWang2021AIJ} that the ten notions of plans lead to the same \emph{konwing how} logic, but over finite models,
the \emph{konwing how} logic based on knowledge-based plans requires an extra axiom, which leads to the same logic as the logic of strategically \emph{knowing how}. 

In \cite{Li21tableau}, a tableau-based decision procedure was proposed for the logic of \emph{knowing how} via simple plans.
This paper develops the method and presents a tableau procedure for the \emph{knowing how} logic via strategies proposed in \cite{FervariEtAl2017}. Strategically \emph{knowing how} can not be handled by the original method, since strategies are much more complicated than simple plans.
This paper also shows that the satisfiability problem of the logic of strategically \emph{knowing how} is in \pspace.  With other known results, this leads to the result that the satisfiability problem of the logic of strategically \emph{knowing how} is \pspace-complete.

The structure of this paper is as follows: Section 2 recalls the logic of strategically \emph{knowing how}; Section 3 presents a tableau procedure for the logic and proves its soundness; 
Section 4 shows the completeness of the tableau procedure and proves that the complexity of 
this logic is \pspace-complete. 
Section 5 concludes with some remarks.

\section{The logic of strategically knowing how}
This section presents the multi-agent version of the logic of strategically \emph{knowing how} from \cite{FervariEtAl2017}.

Let $\BP$ be a set of propositional letters and $\Ag$ be a set of agents where $|\Ag|\geq 2$.
\begin{definition}[$\ELKh_n$ Language]
	The Epistemic Language $\Lan_{\ELKh_n}$ of \emph{Knowing how} is defined by the following BNF where $p\in \BP$ and $i\in\Ag$:
	$$\phi::= \bot\mid p\mid \neg\phi\mid (\phi\land\phi)\mid \K_i\phi\mid \Kh_i{\phi}. $$
\end{definition}

We use $\top, \lor, \to$ as usual abbreviations. 
The formula $\K_i\phi$ means that the agent $i$ knows that $\phi$, and the formula $\Kh_i\phi$ means that the agent knows how to achieve the goal that $\phi$.

\begin{definition}[$\ELKh_n$ Models]
	A model $\M$ is a quintuple $\lr{W,\{\sim_i\mid i\in\Ag \},\{\ACT_i\mid i\in\Ag\},\{R_a \mid a\in \ACT_i,i\in\Ag \},V}$ where:
	\begin{itemize}
		\item $W$ is a non-empty set of states,
		\item $\sim_i\ \subseteq {W\times W}$ is an equivalence relation for each $i\in\Ag$,
		\item $\ACT_i$\ is a set of actions for each $i\in\Ag$,
		\item $R_a  \subseteq {W\times W}$ is a binary relation for each 
		      $a\in \ACT$ where $\ACT=\bigcup_{i\in\Ag}\ACT_i$,
		\item $V:W\to 2^\BP$ is a valuation function.
	\end{itemize}
\end{definition}

Given $s\in W$, we use $[s]^i$ to denote the equivalence class of $s$ over $\sim_i$, i.e., $[s]^i=\{t\in W\mid s\sim_i t\}$, and use $[W]^i$ to denote the set of all equivalence classes of states in $W$ over $\sim_i$, namely $[W]^i=\{[s]^i\mid s\in W\}$.
We say that the action $a\in\Act_i$ is \emph{executable} at $s$ if $(s,t)\in R_a$ for some $t$.
We use $[s]^i\rel{a}[t]^i$ to denote that there are some $s'\in [s]^i$ and some $t'\in [t]^i$ such that $(s',t')\in R_a$.


\begin{definition}[Strategies]
	Given a model $\M$, a \emph{uniformly executable strategy} (or simply called \emph{strategy}) for agent $i$ in $\M$ is a partial function $\sigma:[W]^i\to \ACT_i$ such that $\sigma([s]^i)$ is executable at all $s'\in[s]^i$. Particularly, the empty function is also a strategy, the \emph{empty strategy}.
\end{definition}

We use $\Dom(\sigma)$ to denote the domain of $\sigma$.

\begin{definition}[Executions]
	Given a strategy $\sigma$ of agent $i$ in $\M$, a \emph{possible execution} of $\sigma$ is a possibly infinite sequence of equivalence classes $\delta=[s_0]^i[s_1]^i\cdots$ such that $[s_{k}]^i\rel{\sigma([s_{k}]^i)}[s_{k+1}]^i$ for all $0\leq k< |\delta|$.
	Particularly, $[s]^i$ is a possible execution if $[s]^i\not\in\Dom(\sigma)$.
	If the execution $\rho$ is a finite sequence $[s_0]^i\cdots[s_k]^i$, we call $[s_k]^i$ the \LeafNode\ of $\rho$.
	A possible execution  of $\sigma$ is \emph{complete} if it is infinite or its \LeafNode\ is not in $\Dom(\sigma)$.
\end{definition}

Given an $i$-strategy $\sigma$, we use $\CELeafN(\sigma,[s]^i)$ to denote the set of all  \LeafNode s of all $\sigma$'s complete executions
starting from $[s]^i$.


\begin{definition}[$\ELKh_n$ Semantics] 
	The satisfaction relation $\vDash$ between a pointed model ($\M,s$) and a formula $\phi$ is defined 
	as follows:
	\[\begin{array}{lll}
			\M,s\vDash p             & \iff & s\in V(p)                                                                                              \\
			\M,s\vDash \neg\phi      & \iff & \M,s\nvDash \phi                                                                                       \\
			\M,s\vDash \phi\land\psi & \iff & \M,s\vDash\phi \text{ and } \M,s\vDash \psi                                                            \\
			\M,s\vDash \K_i{\phi}    & \iff & \text{for all }s': \text{ if }s{\sim_i}s'\text{ then }\M,s'\vDash\phi                                  \\
			\M,s\vDash \Kh_i{\phi}   & \iff & \text{there exists a strategy $\sigma$ for agent $i$ such that }                                       \\
			                         &      & 1.\text{ all $\sigma$'s complete executions  starting from $[s]^i$ are finite, and }                   \\ 
			                         &      & 2. [t]^i{\subseteq}\{s'\in W\mid \M,s'\vDash\phi \} \text{ for all }[t]^i {\in}\CELeafN(\sigma,[s]^i).
			\\
		\end{array}\]
\end{definition}


\begin{proposition}\label{pro.validFormulas}
	The following formulas are valid.


	\begin{enumerate}
		\item[(1).] $\K_i\phi\to\Kh_i\phi$
		\item[(2).] $\Kh_i\phi\to \K_i\Kh_i\phi$
		\item[(3).] $\neg\Kh_i\phi\to \K_i\neg\Kh_i\phi$
		\item[(4).] $\Kh_i\phi\to\Kh_i\K_i\phi$
		\item[(5).] $\Kh_i\Kh_i\phi\to\Kh_i\phi$
	\end{enumerate}
\end{proposition}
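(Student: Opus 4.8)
The plan is to check (1)--(5) directly against the semantics, the only real work being in (5). Throughout, write $\llrr{\psi}$ for $\{s\in W\mid\M,s\vDash\psi\}$, and note first that $\llrr{\Kh_i\psi}$ is always a union of $\sim_i$-classes, since the clause for $\Kh_i$ refers to $s$ only through $[s]^i$. Items (2) and (3) are then immediate: $s\vDash\Kh_i\phi$ gives $[s]^i\subseteq\llrr{\Kh_i\phi}$, hence $s\vDash\K_i\Kh_i\phi$, and dually for (3). For (1), if $s\vDash\K_i\phi$ then $[s]^i\subseteq\llrr{\phi}$, and the empty strategy witnesses $\Kh_i\phi$ at $s$: its unique complete execution from $[s]^i$ is $[s]^i$ itself, which is finite and whose \LeafNode\ $[s]^i$ lies in $\llrr{\phi}$. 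For (4), use that $\M,s'\vDash\K_i\phi$ iff $[s']^i\subseteq\llrr{\phi}$: if $\sigma$ witnesses $\Kh_i\phi$ at $s$, then the same $\sigma$ witnesses $\Kh_i\K_i\phi$ at $s$, since every $[t]^i\in\CELeafN(\sigma,[s]^i)$ satisfies $[t]^i\subseteq\llrr{\phi}$ and hence $[t]^i\subseteq\llrr{\K_i\phi}$.

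For (5), suppose $s\vDash\Kh_i\Kh_i\phi$, witnessed by a strategy $\sigma$: all complete executions of $\sigma$ from $[s]^i$ are finite, and $[t]^i\subseteq\llrr{\Kh_i\phi}$ for every $[t]^i\in\CELeafN(\sigma,[s]^i)$; for each such $[t]^i$ fix a strategy witnessing $\Kh_i\phi$ from it. The aim is to glue $\sigma$ and these witnesses into one strategy $\sigma^\ast$ witnessing $\Kh_i\phi$ at $s$. The obvious move --- take the union of these partial functions --- fails twice over: the union need not be a function (different witnesses may prescribe different actions on a shared class), and even after that is patched, concatenating strategies can create a cycle through equivalence classes, hence an infinite complete execution, violating clause~1 of the $\Kh_i$ semantics. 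Taming this is where I expect the real effort to go.

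To do it cleanly I would first prove a structural lemma: the set $G:=\{\,C\in[W]^i\mid C\subseteq\llrr{\Kh_i\phi}\,\}$ is the least fixpoint of the monotone operator sending a set $X$ of $\sim_i$-classes to
\[
\{\,C\in[W]^i\mid C\subseteq\llrr{\phi}\,\}\ \cup\ \{\,C\in[W]^i\mid \exists a\in\Act_i\text{ executable at every state of }C\text{ with }C'\in X\text{ for every }C'\text{ with }C\rel{a}C'\,\}.
\]
The ``$\subseteq$'' inclusion is the substantive half: given any witness for $\Kh_i\phi$, argue by well-founded induction along the tree of its complete executions, which has no infinite branch precisely because those executions are all finite. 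This presentation then yields, by transfinite recursion, an ordinal rank on $G$ ($0$ on the $\llrr{\phi}$-classes; a positive rank witnessed by some $a\in\Act_i$ all of whose $\rel{a}$-successors have strictly smaller rank) and, on choosing such an $a$ at every non-$\llrr{\phi}$ class of $G$, a single \emph{uniform} strategy $\tau$ whose complete executions, started anywhere in $G$, are all finite and end at $\llrr{\phi}$-classes. Now let $\sigma^\ast$ follow $\tau$ on $G$ and $\sigma$ on $\Dom(\sigma)\setminus G$, and be undefined elsewhere; this is a well-defined strategy. Take any complete execution $\delta=C_0C_1\cdots$ of $\sigma^\ast$ from $C_0=[s]^i$. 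While $\delta$ stays outside $G$ it follows $\sigma$: at a class $C_j\notin G$ it moves along $\sigma$ if $C_j\in\Dom(\sigma)$, while if $C_j\notin\Dom(\sigma)$ then $C_0\cdots C_j$ is a complete $\sigma$-execution from $[s]^i$, so $C_j\in\CELeafN(\sigma,[s]^i)$ and hence $C_j\subseteq\llrr{\Kh_i\phi}$, i.e. $C_j\in G$ --- a contradiction. Since $\sigma$'s complete executions from $[s]^i$ are finite, $\delta$ must enter $G$ after finitely many steps, at some $C_k$; from $C_k$ onward $\delta$ stays in $G$ and follows $\tau$, so the ranks of $C_k,C_{k+1},\dots$ strictly decrease and $\delta$ terminates at an $\llrr{\phi}$-class. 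Thus every complete execution of $\sigma^\ast$ from $[s]^i$ is finite with \LeafNode\ in $\llrr{\phi}$, i.e. $\sigma^\ast$ witnesses $s\vDash\Kh_i\phi$. The crux of the whole argument is the fixpoint lemma and the extraction of the uniform, rank-decreasing strategy $\tau$; the rest is bookkeeping.
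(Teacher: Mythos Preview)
Your treatment of (1)--(4) matches the paper's: it too invokes the empty strategy for (1) and says (2)--(4) ``follow from the semantics,'' which is exactly the content of your first paragraph. For (5) the paper gives no argument at all and simply refers to~\cite{FervariEtAl2017}, so your proof is strictly more than what the paper supplies.

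Your argument for (5) is correct. The two potential pitfalls you flag --- clashing prescriptions when naively gluing witnesses, and cycles arising from concatenation --- are real, and the fixpoint/rank detour is the standard cure. A couple of points worth tightening if you write this up: (i) the inclusion ``least fixpoint $\subseteq G$'' is not circular but does itself require the rank-based strategy $\tau$ on the least fixpoint, so it is cleanest to build $\tau$ on the least fixpoint first, observe that $\tau$ witnesses $\Kh_i\phi$ from every class there (giving least fixpoint $\subseteq G$), then prove $G\subseteq$ least fixpoint by well-founded induction on a witnessing strategy's execution tree; (ii) when you say ``follow $\tau$ on $G$,'' you mean on $\Dom(\tau)=\{C\in G\mid C\not\subseteq\llrr{\phi}\}$, which is what makes $\sigma^\ast$ a function with disjoint pieces; (iii) in the endgame, the case $[s]^i\in G$ should be disposed of up front (then $\tau$ alone already witnesses $\Kh_i\phi$ at $s$), so that the ``while outside $G$'' phase genuinely starts outside $G$. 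With those clarifications the proof goes through as written: once $\delta$ enters $G$ it stays there because every $\tau$-successor of a $G$-class lies in $G$, and ranks strictly decrease.

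In short: same approach on (1)--(4); on (5) you supply a correct self-contained proof where the paper only cites the original source.
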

\begin{proof}
	For (1), it is due to the empty strategy. For (2), (3), and (4), it follows from the semantics. For (5), see~\cite{FervariEtAl2017}.
\end{proof}


\section{Tableaux}
This section presents a tableau procedure for the logic $\ELKh_n$ and shows the soundness of the tableau procedure.

Given $\phi$, let $\subpl{\phi}$ be the set $\{\psi,\neg\psi \mid \psi$ is a subformula of $\phi\}\cup\{\K_i\psi,\neg\K_i\psi\mid \Kh_i\psi$ is a subformula of $\phi\}$.

\subsection{Tableau procedure}

A tableau is a rooted 
tree in which each node is labeled with a set of prefixed formulas.
A prefixed formula is a pair $\lr{\sigma,\phi}$ where the prefix $\sigma$ is an alternative sequence of natural numbers and agents or $\Kh$-formulas, such as $\lr{1 i 2 {\Kh_i p} 3, \phi}$.
The prefixes represent states in models.
The agent symbols and $\Kh$-formulas occurring in $\sigma$ indicate some epistemic information and action information on the current state.
For example, the prefix $1 i 2 {\Kh_i p} 3$ indicates the following informations: there are three states $1$, $2$, and $3$; $1\sim_i 2$; there is an action $a$ such that $a$ is a good plan for $2\vDash\Kh_i p$ and $(2,3)\in R_a$.

\begin{definition}[Tableaux]
	A tableau for $\phi_0$ is a labeled tree that is defined as follows:
	\begin{itemize}
		\item[A.] Create the root node and label it with $\lr{0,\phi_0}$;
		\item[B.]  Extend the tree by rules in Table \ref{tab:tableau}.
	\end{itemize}
\end{definition}

\begin{table}[htbp]
	\centering
	\begin{tabular}{ll}
		\AxiomC{$\lr{\sigma,\neg\neg\phi}$}
		\LeftLabel{$(\ruleNeg)$}
		\UnaryInfC{$\lr{\sigma,\phi}$}
		\DisplayProof
		\vspace*{1em}
		 &                                            \\

		\AxiomC{$\lr{\sigma,\neg(\phi_1\wedge\phi_2)}$}
		\LeftLabel{$(\ruleDisj)$}
		\UnaryInfC{
			\begin{tabular}{l|l|l}
				$\lr{\sigma,\neg\phi_1}$ & $\lr{\sigma,\neg\phi_1}$ & $\lr{\sigma,\phi_1}$     \\
				$\lr{\sigma,\neg\phi_2}$ & $\lr{\sigma,\phi_2}$     & $\lr{\sigma,\neg\phi_2}$ \\
			\end{tabular}
		}
		\DisplayProof
		 &
		\AxiomC{$\lr{\sigma,\phi_1\wedge\phi_2}$}
		\LeftLabel{$(\ruleConj)$}
		\UnaryInfC{\parbox{40pt}{$\lr{\sigma,\phi_1}$ \\$\lr{\sigma,\phi_2}$}}
		\DisplayProof
		\vspace*{1em}

		\\

		\AxiomC{$\lr{\sigma,\neg\K_i\phi}$}
		\LeftLabel{(\ruleCutNegK)}
		\RightLabel{}
		\UnaryInfC{
			$\lr{\sigma,\neg\phi}\mid \lr{\sigma,\phi}$
		}
		\DisplayProof
		\vspace*{1em}
		 &

		\AxiomC{$\lr{\sigma,\K_i\phi}$}
		\LeftLabel{$(\ruleT)$}
		\UnaryInfC{$\lr{\sigma,\phi}$}
		\DisplayProof
		\vspace*{1em}
		\\

		\AxiomC{$\lr{\sigma,\Kh_i\phi}$}
		\LeftLabel{(\ruleCutKh)}
		\RightLabel{}
		\UnaryInfC{
			\begin{tabular}{c|c}
				$\lr{\sigma,\neg\K_i\phi}$ & $\lr{\sigma,\K_i\phi}$ \\
			\end{tabular}
		}
		\DisplayProof
		\vspace*{1em}
		 &
		\AxiomC{$\lr{\sigma,\neg\Kh_i\phi}$}
		\LeftLabel{(\ruleCutNegKh)}
		\UnaryInfC{$\lr{\sigma,\neg\K_i\phi}$}
		\DisplayProof
		\\ 
		\AxiomC{
			\begin{tabular}{c}
				$\lr{\sigma,\neg\K_i\phi}$ \\
			\end{tabular}
		}
		\LeftLabel{$(\ruleNegK)$}
		\RightLabel{$n'$ is new}
		\UnaryInfC{$\lr{\sigma i n',\neg\phi}$
		}
		\DisplayProof
		\vspace*{1em}
		 &
		\AxiomC{$\lr{\sigma,\K_i\phi}$}
		\LeftLabel{$(\ruleK)$}
		\RightLabel{$\sigma i n'$ is used}
		\UnaryInfC{$\lr{\sigma i n',\phi}$}
		\DisplayProof
		\vspace*{1em}
		\\

		\AxiomC{$\lr{\sigma,\K_i\phi}$}
		\LeftLabel{$(\ruleFour)$}
		\RightLabel{$\sigma i n'$ is used}
		\UnaryInfC{$\lr{\sigma i n',\K_i\phi}$}
		\DisplayProof
		\vspace*{1em}
		 &
		\AxiomC{$\lr{\sigma,\neg\K_i\phi}$}
		\LeftLabel{$(\ruleFive)$}
		\RightLabel{$\sigma i n'$ is used}
		\UnaryInfC{$\lr{\sigma i n',\neg\K_i\phi}$}
		\DisplayProof
		\vspace*{1em}
		\\

		\AxiomC{$\lr{\sigma,\Kh_i\phi}$}
		\LeftLabel{$(\ruleKhtoKKh)$}
		\RightLabel{$\sigma i n'$ is used}
		\UnaryInfC{$\lr{\sigma i n',\Kh_i\phi}$}
		\DisplayProof
		\vspace*{1em}
		 &
		\AxiomC{$\lr{\sigma,\neg\Kh_i\phi}$}
		\LeftLabel{$(\ruleKhFive)$}
		\RightLabel{$\sigma i n'$ is used}
		\UnaryInfC{$\lr{\sigma i n',\neg\Kh_i\phi}$}
		\DisplayProof
		\\

		\AxiomC{
			\begin{tabular}{l}
				$\lr{\sigma,\Kh_i\phi}$ \\
				$\lr{\sigma,\neg\K_i\phi}$
			\end{tabular}
		}
		\LeftLabel{$(\newruleKh)$}
		\RightLabel{$n'$ is new}
		\UnaryInfC{
			\begin{tabular}{l}
				$\lr{\sigma \Kh_i\phi n',\K_i\phi}$
			\end{tabular}
		}
		\DisplayProof
		\vspace*{1em}

		 &

		\AxiomC{
			\begin{tabular}{l}
				$\lr{\sigma,\neg\Kh_i\phi}$ \\
				$\lr{\sigma,\Kh_i\psi}$     \\
				$\lr{\sigma,\neg\K_i\psi}$
			\end{tabular}
		}
		\LeftLabel{$(\ruleNegKh)$}
		\RightLabel{$n'$ is new}
		\UnaryInfC{
			\begin{tabular}{l}
				$\lr{\sigma \Kh_i\psi n',\K_i\psi}$      \\
				$\lr{\sigma \Kh_i\psi n',\neg\Kh_i\phi}$ \\
			\end{tabular}
		}
		\DisplayProof
		\\
	\end{tabular}
	\caption{Tableau rules}\label{tab:tableau}
\end{table}

Next, we will give a procedure to construct a tableau (Definition \ref{def.tableaux} below). In Section 3.2, we will show that the procedure is sound, and we in Section 4.1 will show that it is complete, and we in Section 4.2 will show that it runs in polynomial space.
Before that, we first introduce some auxiliary notations below.

Let $\Act$ be a set of actions. We use $\Act^+$ to denote the set $\Act\cup\{\epsilon\}$.
Let $\Gamma$ be a set of formulas. We use $\Gamma|\K_i$, $\Gamma|\neg\K_i$ and $\Gamma|\Kh_i$ to respectively denote the set $\{\phi\in\Gamma \mid \phi$ is of the form $\K_i\psi \}$, $\{\phi\in\Gamma\mid \phi$ is of the form $\neg\K_i\psi \}$ and $\{\phi\in\Gamma\mid \phi$ is of the form $\Kh_i\psi \}$.
We say that a formula set $\Gamma$ is \emph{blatantly inconsistent} iff either $\phi,\neg\phi\in \Gamma$ for some formula $\phi$ or $\bot\in\Gamma$.

A labeled tree $\T$ is a triple $\lr{N,E,L}$,
where $\lr{N,E}$ is a rooted tree with the node set $N$ and the edge set $E$
and $L:N\cup E\to \mathcal{P}(\Lan_{\ELKh_n})\cup\Ag\cup\Act^+$ is a label function such that each node is labeled a formula set and each edge is labeled an agent $i\in\Ag$ or an action $a\in \Act^+$.
A node sequence $n_1 \cdots n_{h+1}$ is a \emph{path} in $\T$ if $(n_k,n_{k+1})\in E$ for all $1\leq k\leq h$.
If the node sequence $n_1 \cdots n_{h+1}$ is a path in $\T$ and the label $L(n_k,n_{k+1})$ is either $i\in\Ag$ or $\epsilon$ for all $1\leq k\leq h$, we then say that $n_1$ is an \emph{$i$-ancestor} of $n_{h+1}$.

A tree is called an \emph{and-or} tree if each non-leaf node is marked as ``and" node or ``or" node.
A subtree of an and-or tree is called \emph{complete} if it contains the root node, and each ``and" non-leaf node has all its child nodes, and each ``or" non-leaf node has at least one child node.

Now we are ready to give a procedure to construct a tableau.
We remark that, strictly speaking, the tree constructed by the following procedure is not really a tableau.
Rather, it is a tree in which the desired tableau is embedded. Such trees are called \emph{pre-tableaux} in \cite{halpern_guide_1992}.
Since in the remaining paper, we will work only on the following procedure and show that the following procedure is sound and complete and runs in polynomial space, it does not matter what we call it. So, in the remaining paper, we will call the tree constructed by the following procedure a tableau.

\begin{definition}[Tableaux construction]\label{def.tableaux}
	A tableau for $\phi_0$ is a labeled and-or tree $\T_{\phi_0}$
	which is constructed by the following steps:

	\begin{itemize}
		\item[(I).] Construct a tree consisting of a single node $n_0$ (i.e. the root node), and label the root node the formula set $\{\phi_0\}$. 
		\item[(II).] Repeat until none of (1)-(2) below applies:
			\begin{itemize}
				\item[(1).] \textit{Forming a subformula-closed propositional tableau}: if $n$ is an unblocked leaf node and $L(n)$ is not blatantly inconsistent,
					then mark $n$ as an ``or" node and check the first unchecked formula $\phi\in L(n)$ at $n$ by the following:
					\begin{itemize}
						\item[(a).] 
							If $\phi$ is of the form $\neg\neg\psi $ and $\psi$ is not in $L(n)$, then create a successor node $n'$ of $n$, set

							\[
								\begin{array}{ll}
									 & L(n')=L(n)\cup\{\psi\}, \\
									 & L(n,n')=\epsilon,
								\end{array}
							\]

							and mark $\phi$ and all checked formulas at $n$ as ``checked" at $n'$.

						\item[(b).] 
							If $\phi$ is of the form $\phi_1\land\phi_2 $ and either $\phi_1$ or $\phi_2$ is not in $L(n)$, then create a successor node $n'$ of $n$, set
							\[
								\begin{array}{ll}
									 & L(n')=L(n)\cup\{\phi_1,\phi_2\}, \\
									 & L(n,n')=\epsilon,
								\end{array}
							\]

							and mark $\phi$ and all checked formulas at $n$ as ``checked" at $n'$

						\item[(c).] 
							If $\phi$ is of the form $\neg(\phi_1\land\phi_2)$ and none of the three sets $\{\neg\phi_1,\neg\phi_2\}$, $\{\neg\phi_1,\neg\phi_2\}$, $\{\neg\phi_1,\neg\phi_2\}$ is a subset of $L(n)$, then
							create three successors $n_1,n_2,n_3$ of $n$, set

							\[
								\begin{array}{ll}
									 & L(n_1)    =  L(n)\cup\{\neg\phi_1,\neg\phi_2\}, \\
									 & L(n_2)    =  L(n)\cup\{\neg\phi_1,\phi_2\},     \\
									 & L(n_3)    =  L(n)\cup\{\phi_1,\neg\phi_2\},     \\
									 & L(n,n_1)  =  L(n,n_2)=L(n,n_3)=\epsilon,
								\end{array}
							\]

							and mark $\phi$ and all checked formulas at $n$ as ``checked" at $n_1,n_2,n_3$

						\item[(d).] 
							If $\phi$ is of the form $\K_i\psi$ and $\psi$ is not in $L(n)$, then create a successor node $n'$ of $n$, and set
							\[
								\begin{array}{ll}
									 & L(n')=L(n)\cup\{\psi\}, \\
									 & L(n,n')=\epsilon,
								\end{array}
							\]

							and mark $\phi$ and all checked formulas at $n$ as ``checked" at $n'$

						\item[(e).] 
							If $\phi$ is of the form $\neg\K_i\psi$ and neither $\neg\psi$ nor $\psi$ is in $L(n)$, then create two successors $n_1,n_2$ of $n$, set
							\[
								\begin{array}{ll}
									 & L(n_1)=L(n)\cup\{\neg\psi\}, \\
									 & L(n_2)=L(n)\cup\{\psi\},     \\
									 & L(n,n_1)=L(n,n_2)=\epsilon
								\end{array}
							\]

							and mark $\phi$ and all checked formulas at $n$ as ``checked" at $n_1$ and $n_2$

						\item[(f).] 
							If $\phi$ is of the form $\Kh_i\psi$ and neither $\neg\K_i\psi$ nor $\K_i\psi$ is in $L(n)$, then create two successors $n_1,n_2$ of $n$, set
							\[
								\begin{array}{ll}
									 & L(n_1)=L(n)\cup\{\neg\K_i\psi\}, \\
									 & L(n_2)=L(n)\cup\{\K_i\psi\},     \\
									 & L(n,n_1)=L(n,n_2)=\epsilon,
								\end{array}
							\]

							and mark $\phi$ and all checked formulas at $n$ as ``checked" at $n_1$ and $n_2$.

						\item[(g).] 
							If $\phi$ is of the form $\neg\Kh_i\psi$ and $\neg\K_i\psi$  is not in $L(n)$, then create a successor $n'$ of $n$, set
							\[
								\begin{array}{ll}
									 & L(n')=L(n)\cup\{\neg\K_i\psi\}, \\
									 & L(n,n')=\epsilon,
								\end{array}
							\]

							and mark $\phi$ and all checked formulas at $n$ as ``checked" at $n'$

						\item[(h).] if none of (a)-(g) above applies, then mark $\phi$ as ``checked".
					\end{itemize}
				\item[(2).]
					\textit{Creating successors for $\neg\K_i,\Kh_i,\neg\Kh_i$ formulas}: if $n$ is an unblocked leaf node, $L(n)$ is not blatantly inconsistent, and each formula in $L(n)$ is marked as ``checked", then mark (or re-mark) $n$ as an ``and" node and do the following:
					\begin{itemize}
						\item[(i).] For each formula in $L(n)$ of the form $\neg\K_i\psi$ , if there is no $i$-ancestor of $n'$ such that $L(n')=\Sigma(n,\neg\K_i\psi)$, then create a successor $n_{\neg\K_i\psi}$ of $n$ and set
							\[
								\begin{array}{ll}
									 & L(n_{\neg\K_i\psi})=\Sigma(n,\neg\K_i\psi), \\
									 & L(n,n_{\neg\K_i\psi})=i
								\end{array}
							\]
							in which
								{$$\Sigma(n,\neg\K_i\psi)=\{\neg\psi\}\cup (L(n)|\K_i) \cup (L(n)|\neg\K_i)\cup (L(n)|\Kh_i)\cup (L(n)|\neg\Kh_i).$$}


						\item[(j).] For each pair on $L(n)$ of the form $(\Kh_i\psi,\neg\K_i\psi)$, create a successor $n_{\Kh_i\psi}$ of $n$, and set
							\[
								\begin{array}{ll}
									 & L(n_{\Kh_i\psi})=\{\K_i\psi\},    \\
									 & L(n,n_{\Kh_i\psi})=a_{\Kh_i\psi}.
								\end{array}
							\]


						\item[(k).] For each triple on $L(n)$ of the form $(\neg\Kh_i\chi,\Kh_i\psi,\neg\K_i\psi)$, create a successor $n_{(\neg\Kh_i\chi,\Kh_i\psi)}$, and set
							\[
								\begin{array}{ll}
									 & L(n_{(\neg\Kh_i\chi,\Kh_i\psi)})=\{\K_i\psi,\neg\Kh_i\chi\}, \\
									 & L(n,n_{(\neg\Kh_i\chi,\Kh_i\psi)})=a_{\Kh_i\psi}.
								\end{array}
							\]

							Moreover, if there is an ancestor $n'$ of $n_{(\neg\Kh_i\chi,\Kh_i\psi)}$ such that $L(n')=L(n_{(\neg\Kh_i\chi,\Kh_i\psi)})$, then mark $n_{(\neg\Kh_i\chi,\Kh_i\psi)}$ as blocked, and we say that $n_{(\neg\Kh_i\chi,\Kh_i\psi)}$ is blocked by $n'$.

					\end{itemize}
			\end{itemize}
	\end{itemize}
\end{definition}

\begin{definition}
	A subtree of a tableau is \emph{closed} if there is some node $n$ in it such that $L(n)$ is blatantly inconsistent. Otherwise, it is called \emph{open}. A tableau is \emph{closed} iff all its complete subtrees are closed. 
\end{definition}

\subsection{Soundness}
In this subsection, we will show that the procedure of Definition \ref{def.tableaux} is sound.

\begin{definition}[Interpretations]
	Given a model $\M$ 
	and a subtree $\T'=\lr{N',E',L'}$ of  the tableau $\T_{\phi_0}$, 
	let $f$ be a function from $N'$ to $W$.
	We say that $f$ is an \emph{interpretation} of $\T'$ if and only if $\M,f(n)\Vdash\phi$ for all $\phi\in L(n)$ and all $n\in N'$.
\end{definition}

\begin{lemma}\label{lemma.forSound}
	If $\M,s\vDash\phi_0$, then there exists an interpretation of some  complete subtree of any $\T_{\phi_0}$.
\end{lemma}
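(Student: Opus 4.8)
The plan is to construct the interpretation $f$ by induction on the construction of the tableau, building the complete subtree and the map $f$ simultaneously as the tableau-construction procedure in Definition \ref{def.tableaux} runs. We start at the root: set $f(n_0)=s$, which works since $\M,s\vDash\phi_0$ and $L(n_0)=\{\phi_0\}$. The invariant we maintain is that whenever we have handled a node $n$ with $\M,f(n)\vDash\psi$ for all $\psi\in L(n)$, we can pick the appropriate child (for an ``or'' node) or extend $f$ to all children (for an ``and'' node) so that the interpretation property is preserved. Since the root is not blatantly inconsistent under a real model (no $\psi,\neg\psi$ both hold, no $\bot$ holds), and this inconsistency-freeness propagates down the branch we build, the resulting complete subtree is automatically open, but that is not even needed — we only need existence of the interpretation.

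The induction splits into the two phases of step (II). For phase (1), the propositional/epistemic saturation rules (a)--(h): each is a standard local truth-preservation step. Cases (a), (b), (d), (g) are deterministic and immediate from the semantics (e.g.\ for (d), $\M,f(n)\vDash\K_i\psi$ implies $\M,f(n)\vDash\psi$ by reflexivity of $\sim_i$; for (g), $\M,f(n)\vDash\neg\Kh_i\psi$ implies $\M,f(n)\vDash\neg\K_i\psi$ by Proposition \ref{pro.validFormulas}(1)). For the branching cases (c), (e), (f) we use that $\M,f(n)$ satisfies the principal formula to select the one disjunct/branch that holds: for (e), if $\M,f(n)\vDash\neg\K_i\psi$ then in particular $\M,f(n)$ decides $\psi$, so one of $n_1,n_2$ is good; for (f), $\M,f(n)\vDash\Kh_i\psi$ together with Proposition \ref{pro.validFormulas}(4) gives $\M,f(n)\vDash\Kh_i\K_i\psi$, and again $\M,f(n)$ decides $\K_i\psi$, picking $n_1$ or $n_2$. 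In every (1)-case we keep $f$ unchanged on the new node (the edge label is $\epsilon$, the ``state'' does not move), so the invariant is trivially inherited for the formulas carried over, and holds for the freshly added formulas by the argument just given.

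For phase (2), the ``and''-node rules (i)--(k) that create genuinely new states: here $f$ must be extended. Rule (i): if $\neg\K_i\psi\in L(n)$, then $\M,f(n)\vDash\neg\K_i\psi$, so there is $t$ with $f(n)\sim_i t$ and $\M,t\vDash\neg\psi$; set $f(n_{\neg\K_i\psi})=t$, and verify that $t$ satisfies all of $L(n)|\K_i$, $L(n)|\neg\K_i$, $L(n)|\Kh_i$, $L(n)|\neg\Kh_i$ — this is exactly Proposition \ref{pro.validFormulas}(2)--(4) plus the fact that $\K_i\chi,\neg\K_i\chi,\Kh_i\chi,\neg\Kh_i\chi$ are all $\sim_i$-invariant (the $4$, $5$ axioms for $\K$ and their $\Kh$-analogues). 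The genuinely delicate part is rules (j) and (k), which follow action edges $a_{\Kh_i\psi}$ determined by a strategy witnessing $\Kh_i\psi$. When $\Kh_i\psi$ and $\neg\K_i\psi$ both lie in $L(n)$, fix a witnessing strategy $\sigma$ for $\M,f(n)\vDash\Kh_i\psi$; since $[f(n)]^i\notin$ the set of leaf nodes (because $\neg\K_i\psi$ holds, so $f(n)$ is not already a $\psi$-equivalence-class — it is in $\Dom(\sigma)$), $\sigma$ prescribes a first action and there is some $[t]^i\in\CELeafN(\sigma,[f(n)]^i)$ reached after one step along $\sigma$; we set $f(n_{\Kh_i\psi})=t$ and must check $\M,t\vDash\K_i\psi$ (it is a leaf, so its whole class satisfies $\psi$) — or, if not yet a leaf, $\M,t\vDash\Kh_i\psi$ via the tail of $\sigma$, and the construction loops. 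For (k) we additionally carry $\neg\Kh_i\chi$: we need $\M,t\vDash\neg\Kh_i\chi$, which is the subtle point. The main obstacle is precisely this: showing that the ``residual'' $\neg\Kh_i\chi$ survives one step of an unrelated strategy $\sigma$ for $\psi$, and that the blocking condition in (k) (loop detection on repeated labels) is sound, i.e.\ that a blocked node can be interpreted by reusing the state assigned to the ancestor it is blocked by. I expect to handle this by choosing the witnessing strategy $\sigma$ for $\Kh_i\psi$ and simultaneously tracking that $f(n)\vDash\neg\Kh_i\chi$ means \emph{no} strategy achieves $\chi$ from $[f(n)]^i$; combined with $4$/$5$-type invariance of $\neg\Kh_i\chi$ along $\sim_i$ and a careful argument that one $\sigma$-step lands in a state still refuting every $\chi$-strategy, plus picking the blocked node's image to be the ancestor's image (legitimate because the labels, hence all semantic obligations, coincide), the induction closes and the complete subtree with its interpretation $f$ is obtained.
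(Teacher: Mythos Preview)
Your overall induction scheme is right and matches the paper's, and your handling of the propositional rules (a)--(g) and of rule (i) is fine (if occasionally over-engineered: for (f) you only need that $\K_i\psi$ is either true or false at $f(n)$; Proposition~\ref{pro.validFormulas}(4) plays no role there).

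The genuine gap is in your treatment of rules (j) and (k). The child created in (j) carries the label $\{\K_i\psi\}$, and the child in (k) carries $\{\K_i\psi,\neg\Kh_i\chi\}$. Hence the state you assign must satisfy $\K_i\psi$, not merely $\Kh_i\psi$. A one-step successor along the witnessing strategy $\sigma$ will in general \emph{not} satisfy $\K_i\psi$; only the end nodes of complete $\sigma$-executions do. Your hedge ``or, if not yet a leaf, $\M,t\vDash\Kh_i\psi$ via the tail of $\sigma$, and the construction loops'' does not rescue this: the tableau does not loop at that point --- the label is $\K_i\psi$, full stop, and you must produce a state satisfying it now. For (j) the fix is simply to pick any $[t]^i\in\CELeafN(\sigma,[f(n)]^i)$ and set $f(n_{\Kh_i\psi})=t$; then $\M,t\vDash\K_i\psi$ because $\sigma$ witnesses $\Kh_i\psi$.

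For (k) that fix is not enough: you also need $\M,t\vDash\neg\Kh_i\chi$ at the chosen end node, and your proposed ``$\neg\Kh_i\chi$ survives one $\sigma$-step'' argument is both at the wrong granularity (one step rather than a full execution) and not actually argued. The missing idea, which the paper uses, is the validity $\Kh_i\Kh_i\chi\to\Kh_i\chi$ (Proposition~\ref{pro.validFormulas}(5)): if \emph{every} $[t]^i\in\CELeafN(\sigma,[f(n)]^i)$ satisfied $\Kh_i\chi$, then $\sigma$ itself would witness $\M,f(n)\vDash\Kh_i\Kh_i\chi$, hence $\M,f(n)\vDash\Kh_i\chi$, contradicting $\neg\Kh_i\chi\in L(n)$. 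So some end node $t$ satisfies $\K_i\psi\wedge\neg\Kh_i\chi$, and that is your image for $n_{(\neg\Kh_i\chi,\Kh_i\psi)}$. Your concern about blocked nodes is a red herring for this lemma: a blocked node is just a leaf with label $\{\K_i\psi,\neg\Kh_i\chi\}$, and the same argument supplies its image.
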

\begin{proof}
	Let $f_0$ be the function $f_0=\{n_0\mapsto  s\}$ which maps the root $n_0$ of $\T_{\phi_0}$ to the state $s$. It is obvious that $\M,f_0(n_0)\vDash \phi$ for all $\phi\in L(n_0)$.

	Then we only need to show the following statement:

	\begin{center}
		\parbox{0.8\textwidth}{
			If $n$ is a leaf node of a subtree $\T$, $f$ is an interpretation of $\T$, and   $n$ is in the domain of $f$, then\\
			(A): for each construction steps (a)-(g), there is an interpretation of $\T$ extending with one child node of $n$, and \\
			(B): for each construction steps (i)-(k), there is an interpretation of $\T$ extending with all child nodes of $n$.
		}
	\end{center}


	For (A), firstly it is obvious for the steps (a)-(c).
	For the step (d), it follows that $\sim_i$ is a reflexive relation.
	For the steps (e) and (f), it follows from the fact that a formula is either true or false on a state.
	For the step (g), it follows from the fact that $\neg\Kh_i\psi\to \neg\K_i\psi$ is valid (see Proposition \ref{pro.validFormulas}).

	Next, we will show that (B) holds.

	For the step (i), for each $n$'s child node $n_{\neg\K_i\psi}$, we know that $\neg\K_i\psi\in L(n)$ and $L(n_{\neg\K_i\psi})=\{\neg\psi\}\cup (L(n)|\K_i) \cup (L(n)|\neg\K_i)\cup (L(n)|\Kh_i)\cup (L(n)|\neg\Kh_i)$.
	Since $f$ is an interpretation of $\T$ including $n$, it follows that $\M,f(n)\vDash\neg\K_i\psi$. Hence, there exists a state $t_{\neg\K_i\psi}\in [f(n)]^i$ such that $\M,t_{\neg\K_i\psi}\vDash\neg\psi$. Moreover, since $\sim_i$ is an equivalence relation, $(\M,t_{\neg\K_i\psi})$ satisfies all the $\K_i$-formulas and $\neg\K_i$-formulas that are true at $f(n)$.
	Furthermore, by Proposition \ref{pro.validFormulas}, we have that $(\M,t_{\neg\K_i\psi})$ satisfies all the $\Kh_i$-formulas and $\neg\Kh_i$-formulas that are true at $f(n)$.
	Let $f'$ be the $f$-extension $f\cup\{n_{\neg\K_i\psi}\mapsto t_{\neg\K_i\psi}\mid n_{\neg\K_i\psi}$ is a child node of $n \}$.  Therefore, we have that $\M,f'(n_{\neg\K_i\psi})\vDash L(n_{\neg\K_i\psi})$.

	For the step (j), for each $n$'s child node $n_{\Kh_i\psi}$, we know that $\Kh_i\psi\in L(n)$ and $L(n_{\neg\K_i\psi})=\{\K_i\psi \}$.
	Since $f$ is an interpretation of $\T$ including $n$, it follows that $\M,f(n)\vDash\Kh_i\psi$.
	So, by the semantics, there exist an $i$-strategy $\sigma$ and a state $t_{\Kh_i\psi}\in \CELeafN(\sigma,[f(n)]^i)$ such that $\M,t_{\Kh_i\psi}\vDash\K_i\psi$.
	Let $f'$ be the $f$-extension $f\cup\{n_{\Kh_i\psi}\mapsto t_{\Kh_i\psi}\mid n_{\Kh_i\psi}$ is a child node of $n \}$.  Therefore, we have that $\M,f'(n_{\Kh_i\psi})\vDash L(n_{\Kh_i\psi})$.

	For the step (k), for each $n$'s child node $n_{(\neg\Kh_i\chi,\Kh_i\psi)}$, we know that $\neg\Kh_i\chi,\Kh_i\psi\in L(n)$ and $L(n_{(\neg\Kh_i\chi,\Kh_i\psi)})=\{\K_i\psi,\neg\Kh_i\chi \}$.
	Since $f$ is an interpretation of $\T$ including $n$, it follows that $\M,f(n)\vDash\neg\Kh_i\chi\land\Kh_i\psi$.
	Due to $\M,f(n)\vDash\Kh_i\psi$, it follows by the semantics that there exist an $i$-strategy $\sigma$ such that $\M,t\vDash\K_i\psi$ for all $[t]^i\in \CELeafN(\sigma,[f(n)]^i)$.
	Moreover, it must be the case that there exists $t_{(\neg\Kh_i\chi,\Kh_i\psi)}$ such that $[t_{(\neg\Kh_i\chi,\Kh_i\psi)}]^i\in\CELeafN(\sigma,[f(n)]^i)$ and $\M,t_{(\neg\Kh_i\chi,\Kh_i\psi)}\vDash\neg\Kh_i\chi$.
	Otherwise, if $\M,t\vDash\Kh_i\chi$ for all $[t]^i\in\CELeafN(\sigma,[f(n)]^i)$, this implies $\M,f(n)\vDash\Kh\Kh_i\chi$.
	By Proposition \ref{pro.validFormulas}, it follows that $\M,f(n)\vDash\Kh_i\chi$, which is contradictory with the fact that $\M,f(n)\vDash\neg\Kh_i\chi$.
	Hence,  there exists $t_{(\neg\Kh_i\chi,\Kh_i\psi)}$ such that $[t_{(\neg\Kh_i\chi,\Kh_i\psi)}]^i\in\CELeafN(\sigma,[f(n)]^i)$ and $\M,t_{(\neg\Kh_i\chi,\Kh_i\psi)}\vDash\neg\Kh_i\chi$, which implies that $\M,t_{(\neg\Kh_i\chi,\Kh_i\psi)}\vDash\neg\Kh_i\chi\land\K_i\psi$.
	Let $f'$ be the $f$-extension $f\cup\{n_{(\neg\Kh_i\chi,\Kh_i\psi)}\mapsto t_{(\neg\Kh_i\chi,\Kh_i\psi)}\mid n_{(\neg\Kh_i\chi,\Kh_i\psi)}$ is a child node of $n \}$.  Therefore, we have that $\M,f'(n_{(\neg\Kh_i\chi,\Kh_i\psi)})\vDash L(n_{(\neg\Kh_i\chi,\Kh_i\psi)})$.
\end{proof}

The soundness below follows from Lemma \ref{lemma.forSound} above.

\begin{theorem}[Soundness]\label{theo.sound}
	If $\phi_0$ is satisfiable, then $\T_{\phi_0}$ is not closed. 
\end{theorem}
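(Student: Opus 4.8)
The plan is to obtain the theorem as a short consequence of Lemma~\ref{lemma.forSound}, whose interpretation machinery already carries the real content. First I would unfold the two definitions in play: ``$\phi_0$ is satisfiable'' means there is a pointed model $(\M,s)$ with $\M,s\vDash\phi_0$, and ``$\T_{\phi_0}$ is closed'' means that \emph{every} complete subtree of $\T_{\phi_0}$ contains some node whose label is blatantly inconsistent. So, reading the target contrapositively, to show $\T_{\phi_0}$ is not closed it suffices to exhibit one complete subtree none of whose node labels is blatantly inconsistent, i.e.\ one \emph{open} complete subtree.

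Next I would apply Lemma~\ref{lemma.forSound} to the witnessing model: from $\M,s\vDash\phi_0$ we get an interpretation $f$ of some complete subtree $\T'=\lr{N',E',L'}$ of $\T_{\phi_0}$. By the definition of interpretation, $\M,f(n)\vDash\phi$ for every $n\in N'$ and every $\phi\in L(n)$.

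The key observation is then that an interpretable subtree cannot contain a blatantly inconsistent label. Indeed, if some $n\in N'$ had $\psi,\neg\psi\in L(n)$, then $\M,f(n)\vDash\psi$ and $\M,f(n)\vDash\neg\psi$, i.e.\ $\M,f(n)\nvDash\psi$ -- a contradiction; likewise $\bot\in L(n)$ is impossible since $\M,f(n)\nvDash\bot$. Hence no $L(n)$ with $n\in N'$ is blatantly inconsistent, so $\T'$ is an open complete subtree, and therefore $\T_{\phi_0}$ is not closed.

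Since all the combinatorial difficulty -- building the interpretation stepwise through the propositional rules (a)--(g) and the successor rules (i)--(k), and in particular invoking the validities of Proposition~\ref{pro.validFormulas} to transfer $\K_i$- and $\Kh_i$-formulas across $\sim_i$-classes and across strategy executions -- is already packaged into Lemma~\ref{lemma.forSound}, I expect no genuine obstacle here; the only care needed is bookkeeping of the quantifiers (``\emph{some} complete subtree'', ``\emph{every} node'', ``\emph{every} formula in the label'') so that the contrapositive reading of ``closed'' lines up exactly with what the lemma delivers.
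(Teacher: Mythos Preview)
Your proposal is correct and follows exactly the approach of the paper, which simply states that the theorem follows from Lemma~\ref{lemma.forSound}. Your elaboration—that an interpretation forces every labeled formula to be satisfied at some state, hence no label can be blatantly inconsistent, hence the complete subtree produced by the lemma is open—is precisely the routine unpacking the paper leaves implicit.
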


\section{Completeness and complexity}


\subsection{Completeness}
In this subsection, we will show that the procedure of Definition \ref{def.tableaux} is complete.

Recall that $\subpl{\phi}$ is the set $\{\psi,\neg\psi \mid \psi$ is a subformula of $\phi\}\cup\{\K_i\psi,\neg\K_i\psi\mid \Kh_i\psi$ is a subformula of $\phi\}$.
From the construction of $\T_{\phi_0}$, it follows that $L(n)\subseteq \subpl{\phi_0}$ for each node $n$.
Moreover, each formula in $L(n)$ is ``inherited" from $n$'s ancestors, that is, if $\psi\in L(n)$ and $n'$ is an ancestor of $n$ then there exists $\phi\in L(n')$ such that $\psi\in \subpl{\phi}$.

A path $n_1\cdots n_h$ of $\T_{\phi_0}$ is called an $\epsilon$-path iff $L(n_k,n_{k+1})=\epsilon$ for all $1\leq k<h$.
Particularly, a path with length 1 is an $\epsilon$-path.
An $\epsilon$-path $n_1\cdots n_h$ is \emph{maximal} iff there are no such nodes $n$ and $n'$ that either $nn_1\cdots n_h$ or $n_1\cdots n_h n'$ is an $\epsilon$-path.
Given a path $\rho=n_1\cdots n_h$, we use $ini(\rho)$, $end(\rho)$ and $L(\rho)$ to respectively denote $n_1$, $n_h$ and $L(n_h)$.
We use $\rho\rel{x}\rho'$ to means that $\rho\rho'$ is a path and $L(end(\rho),ini(\rho'))=x$.

From the construction of $\T_{\phi_0}$ 
we know that if a node $n$ is blocked then there is a unique node $n'$ which blocks $n$ and $n$ itself is a maximal $\epsilon$-path.
Moreover, if $\rho$ is a maximal $\epsilon$-path and $end(\rho)$ is blocked, then $\rho$ is a single node, i.e. $\rho=n$ where the node $n$ is blocked.

For each maximal $\epsilon$-path $\rho$ of $\T_{\phi_0}$,
by the construction, we know that if $\rho$ is not blocked and $L(\rho)$ is not blatantly inconsistent then each formula in $L(\rho)$ is marked as ``checked".
This means that $L(\rho)$ is closed over $sub^+$-formulas, that is,
if $\phi\in L(\rho)$ then either $\psi$ or $\sim\psi$ is in $L(\rho)$ for all $\psi\in\subpl{\phi}$, where $\sim\psi=\chi$ if $\psi=\neg\chi$, otherwise $\sim\psi=\neg\psi$.


\begin{definition}[$\M^\T$]\label{def.inducedModel}
	Let $\T$ be a complete subtree of $\T_{\phi_0}$. The model induced by $\T$, denoted by $\M^\T$, is defined as follows:
	\begin{itemize}
		\item $W=\{\rho\mid \rho$ is a maximal $\epsilon$-path of $\T$, and $\rho$ is not blocked$\}$,
		\item $\sim_i$ is the minimal equivalence relation $X$ on $W$ such that $\{(\rho,\rho')\mid \rho\rel{i} \rho' \}\subseteq X$, 
		\item $\Act_i=\{a_{\Kh_i\psi} 
			      \mid$ there exists an edge $(n,n')$ in $\T$ such that $L(n,n')=a_{\Kh_i\psi}\}$. 
		\item for each $a\in \Act_i$, $R_a=\{(\rho,\rho')\mid \rho\rel{a} \rho'  $, or $ \rho\rel{a}\rho''$ where $\rho''$ is a maximal $\epsilon$-path blocked by $\rho' \}$.
		\item $f(\rho)\in V(p)$ iff $p\in L(f(\rho))$.
	\end{itemize}
\end{definition}

Please note that by the definition $\Act_i$, we know that if $i\neq j$ then $\Act_i\cap\Act_j=\emptyset$.

The following proposition states that all paths in the same $\sim_i$-closure share the same formulas in the forms $\K_i\phi$, $\neg\K_i\phi$, $\Kh_i\phi$, or $\neg\Kh_i\phi$.

\begin{proposition}\label{pro.equivalenceShareKAndKh}
	Let $\T$ be a complete and open subtree of $\T_{\phi_0}$.
	If $\rho_1\sim_i\rho_2$ in $\M^\T$, then $L(\rho_1)|x = L(\rho_2)|x$ where $x\in\{\K_i,\neg\K_i,\Kh_i,\neg\Kh_i \}$. 
\end{proposition}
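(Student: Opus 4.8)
The plan is to prove Proposition~\ref{pro.equivalenceShareKAndKh} by induction on the definition of $\sim_i$ as the least equivalence relation containing $\{(\rho,\rho')\mid \rho\rel{i}\rho'\}$. Concretely, I would first establish the claim for the generating pairs, i.e.\ when $\rho_1\rel{i}\rho_2$, and then close under reflexivity, symmetry, and transitivity; reflexivity is trivial, symmetry is immediate once the base case is proved in both directions, and transitivity is immediate from equality of the relevant $x$-fragments being transitive. So the whole content is in the base case: if $\rho_1\rel{i}\rho_2$, then $L(\rho_1)|x=L(\rho_2)|x$ for each $x\in\{\K_i,\neg\K_i,\Kh_i,\neg\Kh_i\}$.

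For the base case, recall that $\rho_1\rel{i}\rho_2$ means $\rho_1\rho_2$ is a path with $L(end(\rho_1),ini(\rho_2))=i$, and that such an edge is created only by step~(i) of Definition~\ref{def.tableaux}, applied at the ``and'' node $n=end(\rho_1)$ to some formula $\neg\K_i\psi\in L(n)$, with $ini(\rho_2)$ labelled $\Sigma(n,\neg\K_i\psi)=\{\neg\psi\}\cup(L(n)|\K_i)\cup(L(n)|\neg\K_i)\cup(L(n)|\Kh_i)\cup(L(n)|\neg\Kh_i)$. Two observations drive the argument. First, since $\rho_1$ is a maximal $\epsilon$-path ending at an ``and'' node, $L(\rho_1)=L(n)$ and all formulas in $L(\rho_1)$ are ``checked''. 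Second, the $\epsilon$-rules (a)--(g) that build $\rho_2$ from $ini(\rho_2)$ never add or remove formulas of the form $\K_i\chi$, $\neg\K_i\chi$, $\Kh_i\chi$, $\neg\Kh_i\chi$ — each such rule either decomposes a Boolean connective or, in cases (d)--(g), adds a strictly ``smaller'' formula ($\psi$, $\neg\psi$, $\neg\K_i\psi$) that is never itself one of these four shapes except possibly $\neg\K_i\psi$ in case (g), which is already present in $L(ini(\rho_2))$ whenever $\neg\Kh_i\psi$ is, by the very definition of $\Sigma$. Hence $L(\rho_2)|x=L(ini(\rho_2))|x=\Sigma(n,\neg\K_i\psi)|x$ for each of the four shapes $x$. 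Finally, reading off $\Sigma$: $\Sigma(n,\neg\K_i\psi)|\K_i=L(n)|\K_i=L(\rho_1)|\K_i$ (the component $\{\neg\psi\}$ contributes nothing of shape $\K_i$), and likewise $\Sigma(n,\neg\K_i\psi)|\neg\K_i=L(n)|\neg\K_i$, $\Sigma(n,\neg\K_i\psi)|\Kh_i=L(n)|\Kh_i$, $\Sigma(n,\neg\K_i\psi)|\neg\Kh_i=L(n)|\neg\Kh_i$. This gives $L(\rho_1)|x=L(\rho_2)|x$ for all four $x$, completing the base case.

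The main obstacle I anticipate is the careful bookkeeping in the second observation: one must check that none of the $\epsilon$-construction steps (a)--(h) can introduce a \emph{new} formula of shape $\K_i\chi$, $\neg\K_i\chi$, $\Kh_i\chi$, or $\neg\Kh_i\chi$ into $L(\rho_2)$ beyond those already in $L(ini(\rho_2))$, nor delete any. The only subtle point is step~(g), which turns $\neg\Kh_i\chi$ into $\neg\K_i\chi$; but since $\neg\Kh_i\chi\in L(ini(\rho_2))$ forces $\neg\Kh_i\chi\in L(n)|\neg\Kh_i$, the formula $\neg\K_i\chi$ it produces is harmless for the $\Kh_i$/$\neg\Kh_i$ fragments and, for the $\neg\K_i$ fragment, this is exactly the kind of correspondence we want rather than an obstruction — although strictly it shows $L(\rho_2)|\neg\K_i$ may \emph{properly contain} $\Sigma(n,\neg\K_i\psi)|\neg\K_i$. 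To handle this cleanly I would instead phrase the invariant as: the maximal $\epsilon$-path closure $L(\rho)$ of any label is $sub^+$-closed, so $\neg\Kh_i\chi\in L(\rho)$ already entails $\neg\K_i\chi\in L(\rho)$ (since $\neg\K_i\chi\in\subpl{\neg\Kh_i\chi}$ when $\Kh_i\chi$ occurs in $\phi_0$), and therefore the passage from $ini(\rho_2)$ to $\rho_2$ adds no $\neg\K_i$-formula that was not already forced. With this reformulation the four equalities follow directly from the shape of $\Sigma$ and $sub^+$-closedness of both $L(\rho_1)$ and $L(\rho_2)$.
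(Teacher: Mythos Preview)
Your induction-on-$\sim_i$ scaffolding is fine and matches the paper's case split (their cases (1)--(3) are exactly the ancestor/descendant/common-ancestor situations that your reflexive--symmetric--transitive closure generates). The forward inclusion $L(\rho_1)|x\subseteq L(\rho_2)|x$ is also fine, since labels only grow along $\epsilon$-edges and $\Sigma(n,\neg\K_i\psi)$ copies all $x$-formulas from $L(n)=L(\rho_1)$.

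The gap is in the reverse inclusion. Your key claim, that the $\epsilon$-rules (a)--(g) ``never add formulas of the form $\K_i\chi$, $\neg\K_i\chi$, $\Kh_i\chi$, $\neg\Kh_i\chi$'', is simply false. Rule~(f) on $\Kh_i\psi$ adds $\K_i\psi$ or $\neg\K_i\psi$ (you omitted $\K_i\psi$ from your list). More seriously, the Boolean rules and rule~(d) can produce $x$-shaped formulas as subformulas: if $\K_i(\K_i p\wedge q)\in L(ini(\rho_2))$, then rule~(d) adds $\K_i p\wedge q$ and rule~(b) adds $\K_i p$, a fresh $\K_i$-formula. The same happens with $\neg\neg\K_i\chi$ via rule~(a), with disjunctions via rule~(c), and so on. Your patch in the last paragraph only addresses rule~(g) and does not touch this phenomenon.

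What rescues the reverse inclusion is not a syntactic invariance of $x$-formulas along $\epsilon$-paths, but the \emph{openness} of $\T$ together with $sub^+$-closure of $L(\rho_1)$. This is exactly the paper's argument: any $\chi\in L(\rho_2)$ lies in $\subpl{\phi}$ for some $\phi\in L(\rho_1)$; since $L(\rho_1)$ is $sub^+$-closed, either $\chi$ or $\sim\!\chi$ is in $L(\rho_1)$; if $\sim\!\chi\in L(\rho_1)|x$ then by the forward inclusion $\sim\!\chi\in L(\rho_2)$, making $L(\rho_2)$ blatantly inconsistent and contradicting openness. Hence $\chi\in L(\rho_1)$. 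Without invoking openness in this way your base case does not close, so as written the proposal has a genuine hole.
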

\begin{proof}
	Besides $\rho_1=\rho_2$, there are three possible cases: (1) $end(\rho_1)$ is an $i$-ancestor of $end(\rho_2)$, or (2) $end(\rho_2)$ is an $i$-ancestor of $end(\rho_1)$, or (3) there is some $\rho\in W$ such that $end(\rho)$ is an $i$-ancestor of both $end(\rho_1)$ and $end(\rho_2)$.

	For (1), firstly we will show that $L(\rho_1)|x \subseteq L(\rho_2)|x$.  In the construction step (i) all $x$-formulas are inherited from the $i$-ancestor. Moreover, in the steps (a)-(f) all formulas labeled on the parent node are inherited by each child node. Thus, $L(\rho_1)|x \subseteq L(\rho_2)|x$.

	Secondly, we will show that $L(\rho_2)|x\subseteq L(\rho_1)|x$. Since $end(\rho_1)$ is an ancestor of $end(\rho_2)$, we know from the construction that for each formula $\psi\in L(\rho_2)$ there exists $\phi\in L(\rho_1)$ such that $\psi\in\subpl{\phi}$.
	Moreover, since $\rho_1$ is a  non-blocked maximal $\epsilon$-path and $L(\rho_1)$ is not blatantly inconsistent, it follows that all formulas in $L(\rho_1)$  are marked as ``checked". So, $L(\rho_1)$ is closed over $sub^+$-formulas.
	Hence, for each $x$-formula $\chi\in L(\rho_2)$, either $\chi$ or $\sim\chi$ is in $L(\rho_1)$.
	If $\sim\chi\in L(\rho_1)$, since $\sim\chi$ will be inherited to $L(\rho_2)$, it means that $L(\rho_2)$ will be blatantly inconsistent, which is contradictory with the fact that $\T$ is open. Therefore, it only can be that $\chi\in L(\rho_1)$. 
	Hence, $L(\rho_2)|x\subseteq L(\rho_1)|x$.

	For (2), it can be shown by the same process as for (1).

	For (3), by (1) we know that $L(\rho)|x=L(\rho_1)|x$ and $L(\rho)|x=L(\rho_2)|x$. Thus, $L(\rho_1)|x=L(\rho_2)|x$.
\end{proof}


Before we show the truth lemma, we need the following auxiliary proposition.

\begin{proposition}\label{pro.negKhWillBeInheritedByActionTransitions}
	Let $\T$ be a complete subtree of $\T_{\phi_0}$, $\sigma$ be an $i$-strategy in $\M^\T$, and $\delta=[\rho_1]^i\cdots [\rho_h]^i$ be a $\sigma$-execution in $\M^\T$. If $\neg\Kh_i\chi\in L(\rho_h)$ and $\delta$ is not complete, then there exists some $\rho\in W$ such that $\delta [\rho]^i$is a $\sigma$-execution and $\neg\Kh_i\chi\in L(\rho)$.
\end{proposition}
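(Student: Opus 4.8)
The plan is to identify the action $a:=\sigma([\rho_h]^i)$ as some $a_{\Kh_i\psi}$, trace the edge it labels back to the rule of Definition~\ref{def.tableaux} that created it, and then re-apply that rule, this time feeding it the triple built from $\neg\Kh_i\chi$.

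First I would locate the action. Since $\delta$ is not complete, $[\rho_h]^i\in\Dom(\sigma)$, so $a:=\sigma([\rho_h]^i)$ is defined; being the value of a uniformly executable strategy, $a$ is executable at every state of $[\rho_h]^i$, in particular at $\rho_h$, so $(\rho_h,t)\in R_a$ for some $t\in W$. By the definition of $R_a$ in Definition~\ref{def.inducedModel}, this forces an edge leaving $end(\rho_h)$ whose label is $a$. Since $a\in\Act_i$, we have $a=a_{\Kh_i\psi}$ for some formula $\psi$. Now $end(\rho_h)$ is an ``and'' node (it carries a non-$\epsilon$ outgoing edge), and the only edges leaving an ``and'' node are those created by steps (i), (j), (k), of which step-(i) edges carry an agent label rather than an action; hence the $a$-edge was produced by step (j) or step (k) applied at $end(\rho_h)$, and in either case the triggering pair or triple contains both $\Kh_i\psi$ and $\neg\K_i\psi$. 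Therefore $\{\Kh_i\psi,\neg\K_i\psi\}\subseteq L(\rho_h)$.

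Next I would feed in the hypothesis $\neg\Kh_i\chi\in L(\rho_h)$: the triple $(\neg\Kh_i\chi,\Kh_i\psi,\neg\K_i\psi)$ then occurs on $L(end(\rho_h))$, so step (k) creates a successor $n_\chi:=n_{(\neg\Kh_i\chi,\Kh_i\psi)}$ of $end(\rho_h)$ with $L(n_\chi)=\{\K_i\psi,\neg\Kh_i\chi\}$ and edge label $a_{\Kh_i\psi}$. The crucial observation is that the action name $a_{\Kh_i\psi}$ depends only on $\Kh_i\psi$, so this edge carries exactly the label $a$, regardless of which pair or triple produced the edge found above. Since $\T$ is a complete subtree and $end(\rho_h)$ is a non-leaf ``and'' node of $\T$ (it already carries the $a$-edge), every child of $end(\rho_h)$ --- in particular $n_\chi$ --- lies in $\T$.

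It then remains to read off $\rho$. If $n_\chi$ is not blocked, take $\rho$ to be the maximal $\epsilon$-path of $\T$ with $ini(\rho)=n_\chi$; it is not blocked, since otherwise, being a maximal $\epsilon$-path with a blocked end node, it would consist of the single node $n_\chi$, so $\rho\in W$, and $\neg\Kh_i\chi\in L(n_\chi)\subseteq L(\rho)$ because formulas propagate along $\epsilon$-edges. If instead $n_\chi$ is blocked by an ancestor $n'$ with $L(n')=L(n_\chi)$, take $\rho$ to be the maximal $\epsilon$-path running through $n'$; then $n'$ is not blocked (it has $n_\chi$ strictly below it, hence is not a leaf), so $\rho\in W$, and again $\neg\Kh_i\chi\in L(n')\subseteq L(\rho)$. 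In either case the $a$-edge from $end(\rho_h)$, together with the two clauses defining $R_a$, gives $(\rho_h,\rho)\in R_a$ and hence $[\rho_h]^i\rel{a}[\rho]^i$; since $a=\sigma([\rho_h]^i)$ and $\delta$ is already a $\sigma$-execution, $\delta[\rho]^i$ is a $\sigma$-execution with $\neg\Kh_i\chi\in L(\rho)$, as required. I expect the one delicate point to be the bookkeeping in the blocked case: confirming that a blocked $n_\chi$ still determines a genuine world of $\M^\T$ carrying $\neg\Kh_i\chi$ and that $R_a$ connects $\rho_h$ to it via its second (``blocked by'') clause; the remaining steps are a direct reading of the shapes of the rules in Definition~\ref{def.tableaux} and of $R_a$ in Definition~\ref{def.inducedModel}.
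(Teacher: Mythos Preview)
Your proposal is correct and follows essentially the same route as the paper's own proof: identify $\sigma([\rho_h]^i)=a_{\Kh_i\psi}$, use executability at $\rho_h$ to find an $a_{\Kh_i\psi}$-edge out of $end(\rho_h)$, read off $\Kh_i\psi,\neg\K_i\psi\in L(\rho_h)$ from steps (j)/(k), fire step (k) on the triple $(\neg\Kh_i\chi,\Kh_i\psi,\neg\K_i\psi)$, and handle the blocked/unblocked cases for the resulting child. Your write-up is in fact slightly more explicit than the paper's in two places (why the new edge carries the \emph{same} action label $a$, and why $n_\chi$ must lie in $\T$ via completeness of the ``and'' node $end(\rho_h)$), but the argument is the same.
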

\begin{proof}
	Since $[\rho_1]^i\cdots [\rho_h]^i$ is not complete, it follows that $[\rho_h]^i\in \Dom(\sigma)$. Let $\sigma([\rho_h]^i)=a_{\Kh_i\psi}$ for some $a_{\Kh_i\psi}\in \Act_i$.
	Since $\sigma$ is a uniformly executable strategy, this means that $a_{\Kh_i}$ is executable on $\rho_h$.
	By the definition of $\M^\T$ in Definition \ref{def.inducedModel}, we know that there exists a maximal $\epsilon$-path $\rho'$ of $\T$ such that $\rho_h\rel{a_{\Kh_i\psi}}\rho'$, that is, $\rho_h\rho'$ is a path of $\T$ and $L(end(\rho_h),ini(\rho'))=a_{\Kh_i\psi}$.
	Note that such labels can only be added by the construction step (j) or (k).

	From the construction steps (j) and (k), we then have that $\Kh_i\psi,\neg\K_i\psi\in L(\rho_h)$.
	Since we also have that $\neg\Kh_i\chi\in L(\rho_h)$, by the construction step (k), there is a node $n_{(\neg\Kh_i\chi,\Kh_i\psi)}$ in $\T_{\phi_0}$ such that $L(n_{(\neg\Kh_i\chi,\Kh_i\psi)})=\{\K_i\psi,\neg\Kh_i\chi\}$ and $L(end(\rho_h),n_{(\neg\Kh_i\chi,\Kh_i\psi)})=a_{\Kh_i\psi}$.

	Since $\T$ is a complete subtree of $\T_{\phi_0}$, it follows that $n_{(\neg\Kh_i\chi,\Kh_i\psi)}$ is also a node in $\T$.
	There are two cases: $n_{(\neg\Kh_i\chi,\Kh_i\psi)}$ is blocked or not.

	If $n_{(\neg\Kh_i\chi,\Kh_i\psi)}$ is not blocked,	let $\rho$ be the maximal $\epsilon$-path in $\T$ that begins with the node $n_{(\neg\Kh_i\chi,\Kh_i\psi)}$. By the definition of $\M^\T$, we know that $(\rho_h,\rho)\in R_{a_{\Kh_i\psi}}$. This means that $[\rho_1]^i\cdots [\rho_h]^i [\rho]^i$ is a $\sigma$-execution. Moreover, due to $\neg\Kh_i\chi\in L(ini(\rho))$, it follows that $\neg\Kh_i\chi\in L(\rho)$. 

	If $n_{(\neg\Kh_i\chi,\Kh_i\psi)}$ is blocked by its some node $n'$, then $n'$ is an ancestor of $n$. So, we have that $n'$ is a node in $\T$. Let $\rho''$ be the maximal $\epsilon$-path that contains $n'$. Since block nodes are leaf nodes, it follows that $\rho''$ is not blocked. Thus, $\rho''\in W$. By the definition of $\M^\T$, we have that $(\rho_h,\rho'')\in R_{a_{\Kh_i\psi}}$.
	This means that $[\rho_1]^i\cdots [\rho_h]^i [\rho'']^i$ is a $\sigma$-execution. Moreover, due to $\neg\Kh_i\chi\in L(n)$, $L(n)=L(n')$ and $L(n')\subseteq  L(\rho)$, it follows that $\neg\Kh_i\chi\in L(\rho)$.
\end{proof}

\begin{lemma}[Truth lemma]\label{lemma.forComplete}
	If $\T$ is a complete and open subtree of $\T_{\phi_0}$, then for each $\phi\in\subpl{\phi_0}$, we have that $\M^\T,\rho\vDash\phi$ iff $\phi\in L(\rho)$.
\end{lemma}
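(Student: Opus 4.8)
The plan is to prove the truth lemma by induction on the structure of $\phi \in \subpl{\phi_0}$. Since $L(\rho)$ is closed over $sub^+$-formulas (as $\rho$ is a non-blocked maximal $\epsilon$-path in an open tree, so every formula in $L(\rho)$ is ``checked''), for each $\phi\in\subpl{\phi_0}$ exactly one of $\phi$ or $\sim\phi$ is in $L(\rho)$, which lets the induction proceed cleanly. The base case $\phi = p$ is immediate from the definition of the valuation in $\M^\T$, and the case $\phi=\bot$ is trivial since $\bot\notin L(\rho)$ in an open tree. The Boolean cases $\neg\psi$, $\psi_1\wedge\psi_2$ follow from the propositional tableau construction steps (a)--(c) together with $sub^+$-closure: e.g. if $\neg\neg\psi\in L(\rho)$ then $\psi\in L(\rho)$ by step (a), and if $\psi_1\wedge\psi_2\in L(\rho)$ then $\psi_1,\psi_2\in L(\rho)$ by step (b), and conversely using closure plus openness to rule out the complementary formula.

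For the modal case $\K_i\psi$: if $\K_i\psi\in L(\rho)$, then for any $\rho'$ with $\rho\sim_i\rho'$, Proposition \ref{pro.equivalenceShareKAndKh} gives $\K_i\psi\in L(\rho')$, hence $\psi\in L(\rho')$ by the $\ruleT$-type step (d), so $\M^\T,\rho'\vDash\psi$ by the induction hypothesis; thus $\M^\T,\rho\vDash\K_i\psi$. Conversely, if $\K_i\psi\notin L(\rho)$, then $\neg\K_i\psi\in L(\rho)$ by closure; the ``and''-node step (i) creates an $i$-successor $\rho'$ (or there is an $i$-ancestor playing that role) with $\neg\psi\in L(\rho')$ and $\rho\sim_i\rho'$, so $\M^\T,\rho'\nvDash\psi$ by induction, giving $\M^\T,\rho\nvDash\K_i\psi$. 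One must be careful that step (i) only creates a successor when no suitable $i$-ancestor already exists, so the argument should locate \emph{some} $\rho'\sim_i\rho$ with $\neg\psi\in L(\rho')$ in either case.

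The genuinely hard case is $\Kh_i\psi$, and this is where I expect the main obstacle. For the direction $\Kh_i\psi\in L(\rho)\Rightarrow\M^\T,\rho\vDash\Kh_i\psi$: when $\neg\K_i\psi\in L(\rho)$ as well, the $\newruleKh$-type step (j) has created an $a_{\Kh_i\psi}$-edge, and one defines a strategy $\sigma$ that assigns $a_{\Kh_i\psi}$ to the relevant equivalence classes; one must show every complete $\sigma$-execution from $[\rho]^i$ is finite (this uses the blocking mechanism in step (k) to bound execution length — blocked nodes repeat an ancestor's label, and the branching/nesting of $\Kh$-subformulas is finite) and that every end node satisfies $\K_i\psi$, hence $\psi$ by step (d) and the induction hypothesis. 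When instead $\K_i\psi\in L(\rho)$, use the already-established $\K_i$ case plus validity (1) of Proposition \ref{pro.validFormulas}. For the converse $\Kh_i\psi\notin L(\rho)\Rightarrow\M^\T,\rho\nvDash\Kh_i\psi$: by closure $\neg\Kh_i\psi\in L(\rho)$, and one must show \emph{no} strategy $\sigma$ witnesses $\Kh_i\psi$ at $\rho$. Given any candidate $\sigma$ all of whose complete executions from $[\rho]^i$ are finite, Proposition \ref{pro.negKhWillBeInheritedByActionTransitions} propagates $\neg\Kh_i\psi$ along $\sigma$-executions, so some end node $[\rho']^i\in\CELeafN(\sigma,[\rho]^i)$ has $\neg\Kh_i\psi\in L(\rho')$; then step (g) of the construction ($\neg\Kh_i\psi\Rightarrow\neg\K_i\psi$) forces $\neg\K_i\psi\in L(\rho')$, so by the $\K_i$ case there is $\rho''\sim_i\rho'$ with $\M^\T,\rho''\nvDash\psi$, contradicting that $[\rho']^i$ is a $\psi$-leaf. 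The care needed here is in matching the abstract ``uniformly executable strategy'' of the semantics against the concrete action-edge structure of $\M^\T$, and in handling executions that pass through blocked nodes via the $R_a$ clause that redirects blocked paths to their blocking ancestors.
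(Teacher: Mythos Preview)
Your overall structure matches the paper's proof closely: induction on $\phi$, the $\K_i$ case via Proposition~\ref{pro.equivalenceShareKAndKh} and steps (d)/(i), the $\Kh_i\psi\notin L(\rho)$ direction via Proposition~\ref{pro.negKhWillBeInheritedByActionTransitions} and step (g). Those parts are fine.

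There is, however, a genuine gap in your treatment of the direction $\Kh_i\psi\in L(\rho)$ with $\neg\K_i\psi\in L(\rho)$. You write that finiteness of complete $\sigma$-executions ``uses the blocking mechanism in step (k) to bound execution length''. This is not the argument, and in fact blocking works \emph{against} finiteness rather than for it: in $\M^\T$ the relation $R_{a}$ redirects edges into blocked nodes back to their blocking ancestors, which is exactly how cycles can appear in the model. So blocking does not bound execution length; if anything it is the source of potentially infinite executions.

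The paper's argument is much simpler and does not involve blocking at all. One takes the strategy with a \emph{singleton} domain, $\sigma=\{[\rho]^i\mapsto a_{\Kh_i\psi}\}$. The key observation is that every $a_{\Kh_i\psi}$-successor $\rho''$ of any $\rho'\in[\rho]^i$ satisfies $\K_i\psi\in L(\rho'')$ (this is forced by steps (j)/(k), also in the blocked-and-redirected case), whereas $\neg\K_i\psi\in L(\rho)$; by Proposition~\ref{pro.equivalenceShareKAndKh} this gives $\rho''\notin[\rho]^i$, so $[\rho'']^i\notin\Dom(\sigma)$ and every complete execution has the form $[\rho]^i[\rho'']^i$. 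Finiteness is thus immediate, and the end node carries $\K_i\psi$ by construction. Your phrase ``the relevant equivalence classes'' should be sharpened to the single class $[\rho]^i$; once you do that, the finiteness obstacle you flagged dissolves.
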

\begin{proof}
	We prove it by induction on $\phi$. Since $L(\rho)$ is closed over $sub^+$-formulas, the atomic case and Boolean cases are straightforward. Next, we will focus on the cases of $\K_i\psi$ and $\Kh_i\psi$.

	For the case $\K_i\psi\in L(\rho)$, given $\rho'\in [\rho]^i$, we will show that $\M^\T,\rho'\vDash\psi$. By Proposition \ref{pro.equivalenceShareKAndKh}, we have that $\K_i\psi\in L(\rho')$.
	Moreover, 	since $\T$ is open and $\rho'$ is not blocked, this means that all formulas in $L(\rho')$ are marked as ``checked". Let $\rho'=n_1\cdots n_h$. From the construction, we know that there is some node $n_k$, where $1\leq k\leq h$, such that the first time in $\rho'$ the formula $\K_i\psi$ is marked as ``checked", and this only can be done by the construction step (d). Thus, we have that $\psi\in L(n_k)$, and then $\psi\in L(\rho')$. By IH, we have that $\M^\T,\rho'\vDash\psi$.

	For the case $\K_i\psi\not\in  L(\rho)$, since $L(\rho)$ is closed over $sub^+$-formulas, it follows that $\neg\K_i\psi\in L(\rho)$. Let $end(\rho)$ be the node $n$.
	Since $\rho$ is a non-blocked maximal $\epsilon$-path and $\T$ is open, this means that all formulas in $L(n)$ is marked as ``checked". From the construction, we know that the construction step (i) will be triggered on $n$. Thus, either there is an $i$-ancestor node $n'$ such that $\neg\psi\in L(n')$ or there is an $n$'s $i$-child node $n''$ such that $\neg\psi\in L(n'')$. Since $\T$ is a complete subtree of $\T_{\phi_0}$, it follows that no matter $n'$ or $n''$ will be included in $\T$.  Thus, there is $\rho'\in [\rho]^i$ such that $\neg\psi\in L(\rho')$. Since $\T$ is open, it follows that $\psi\not\in L(\rho')$. By IH, we have that $\M^\T,\rho\not\vDash\psi$. Hence, $\M^\T,\rho\not\vDash\K_i\psi$.

	For the case $\Kh_i\psi\in L(\rho)$, since $L(\rho)$ is closed over $sub^+$-formulas, it follows that either $\K_i\psi\in L(\rho)$ or $\neg\K_i\psi\in L(\rho)$.

	If $\K_i\psi\in L(\rho)$, by the proof of the $\K_i$-case above, we know that $\M^\T,\rho\vDash\K_i\psi$. By Proposition \ref{pro.validFormulas}, we have that $\M^\T,\rho\vDash\Kh_i\psi$.

	If $\neg\K_i\psi\in L(\rho)$, we will show that $\sigma$ is a good strategy for $\M^\T,\rho\vDash\Kh_i\psi$ where $\sigma$ is the function $\{[\rho]^i\mapsto a_{\Kh_i\psi}\}$.

	If $\rho'\in [\rho]^i $ and $(\rho',\rho'')\in R_{a_{\Kh_i\psi}}$, then either $\rho'\rel{a_{\Kh_i\psi}}\rho''$, or $\rho'\rel{a_{\Kh_i\psi}}\rho'''$  where $\rho'''$ is blocked by $\rho''$.
	For the first case, it is obvious that $\rho''\not \in [\rho]^i$. For the second case, we also have that $\rho''\not\in [\rho]^i$. The reason is that nodes can only be blocked in the construction step (k), so for the second case we have $\K_i\psi\in L(\rho''')$ and $L(\rho''')\subseteq L(\rho'')$.
	Due to $\neg\K_i\psi\in L(\rho)$ and Proposition \ref{pro.equivalenceShareKAndKh}, Therefore, for the second case we also have that $\rho''\not\in [\rho]^i$.
	Thus, we have shown that if $\rho'\in [\rho]^i $ and $(\rho',\rho'')\in R_{a_{\Kh_i\psi}}$, then $\rho''\not\in [\rho]^i$,
	which means that 
	$[\rho']^i[\rho'']^i$ is a complete $\sigma$-execution from $[\rho]^i$. 
	Hence, to show that $\M^\T,\rho\vDash\Kh_i\psi$, we only need to show that for each $\rho'\in [\rho]^i$,  (1) the action $a_{\Kh_i\psi}$ is executable at $\rho'$, (which  means that $\sigma$ is a uniform executable strategy,) and (2) $\M^\T,\rho''\vDash\K_i\psi$ for each $\rho''$ with $(\rho',\rho'')\in R_{a_{\Kh_i\psi}}$.

	For (1), due to $\Kh_i\psi,\neg\K_i\psi\in L(\rho)$ and Proposition \ref{pro.equivalenceShareKAndKh}, we know that $\Kh_i\psi,\neg\K_i\psi\in L(\rho')$. Thus, by the construction step (j), there is a node $n_{\Kh_i\psi}$ such that $L(n_{\Kh_i\psi})=\{\K_i\psi\}$ and $L(end(\rho'),n_{\Kh_i\psi})=a_{\Kh_i\psi}$.
	Since $\T$ is a complete subtree of $\T_{\phi_0}$, it follows that $n_{\Kh_i\psi}$ is a node in $\T$.
	Let $\rho'''$ be the maximal $\epsilon$-path that begins with $n_{\Kh_i\psi}$. We have that $\rho'\rel{a_{\Kh_i\psi}}\rho'''$, and then $(\rho',\rho''')\in R_{a_{\Kh_i\psi}}$. Thus,
	the action $a_{\Kh_i\psi}$ is executable at $\rho'$.

	For (2), if  $(\rho',\rho'')\in R_{a_{\Kh_i\psi}}$, it means that either  $L(end(\rho'),ini(\rho''))=a_{\Kh_i\psi}$ or $L(end(\rho'),n')=a_{\Kh_i\psi}$ where $n'$ is blocked by $\rho''$.
	From the construction, we know that only the steps (j) and (k) can add such labels. From these steps, we know that $\K_i\psi\in L(ini(\rho''))$ or $\K_i\psi\in L(n')$.  In either case, we have that $\K_i\psi\in L(\rho'')$. By the proof of the $\K_i$-case above, we know that $\M^\T,\rho''\vDash \K_i\psi$.

	For the case $\Kh_i\psi\not\in L(\rho)$, assume that $\M^\T,\rho\vDash\Kh_i\psi$. By the semantics, there is an $i$-strategy $\sigma$ such that all compete $\sigma$-executions from $[\rho]^i$ are finite and $\M^\T,\rho'\vDash\K_i\psi$ for all $[\rho']^i\in\CELeafN(\sigma,[\rho]^i)$.
	Due to $\Kh_i\psi\not\in L(\rho)$, we have that $\neg\Kh_i\psi\in L(\rho)$. By Proposition \ref{pro.negKhWillBeInheritedByActionTransitions}, we know that there is a complete $\sigma$-execution $[\rho]^i\cdots [\rho']^i$ such that $\neg\Kh_i\psi\in L(\rho')$.  By the construction step (g), we know that $\neg\K_i\psi\in L(\rho')$.
	Due to $\Kh_i\psi\in\subpl{\phi_0}$, it follows that $\K_i\psi\in\subpl{\phi_0}$.
	By proof of the $K_i$-case above, we have that $\M^\T,\rho'\vDash\neg\K_i\psi$. Contradiction!
	Thus, $\M^\T,\rho\not\vDash\Kh_i\psi$.
\end{proof}

The completeness below follows from Lemma \ref{lemma.forComplete} above.

\begin{theorem}[Completeness]\label{theo.complete}
	If $\T_{\phi_0}$ is not closed, 
	then $\phi_0$ is satisfiable.
\end{theorem}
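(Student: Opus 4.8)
The plan is to derive Completeness directly from the Truth Lemma (Lemma~\ref{lemma.forComplete}), in exactly the way Soundness was derived from Lemma~\ref{lemma.forSound}. Suppose $\T_{\phi_0}$ is not closed. By the definition of ``closed'', this means that not all complete subtrees of $\T_{\phi_0}$ are closed, i.e.\ there exists a complete subtree $\T$ of $\T_{\phi_0}$ that is open (contains no node $n$ with $L(n)$ blatantly inconsistent). First I would fix such a $\T$ and form the induced model $\M^\T$ from Definition~\ref{def.inducedModel}.

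Next I would check that $\M^\T$ is a genuine $\ELKh_n$ model: $W$ is the set of non-blocked maximal $\epsilon$-paths of $\T$, which is non-empty because the root node generates one; each $\sim_i$ is an equivalence relation by construction (it is defined as the least equivalence relation containing the $\rel{i}$-edges); each $\ACT_i$ is a set of actions and the $\ACT_i$ are pairwise disjoint as already noted after Definition~\ref{def.inducedModel}; each $R_a$ is a binary relation on $W$; and $V$ is a well-defined valuation, since by Proposition~\ref{pro.equivalenceShareKAndKh} together with closure of $L(\rho)$ under $sub^+$-formulas, $p \in L(\rho)$ is determined consistently. Then, applying the Truth Lemma with $\phi := \phi_0$: the root $n_0$ lies on some non-blocked maximal $\epsilon$-path $\rho_0 \in W$, and $\phi_0 \in L(n_0) \subseteq L(\rho_0)$ (the label only grows along an $\epsilon$-path via the propositional steps (a)--(g)), so $\phi_0 \in L(\rho_0)$; hence by Lemma~\ref{lemma.forComplete}, $\M^\T, \rho_0 \vDash \phi_0$, which witnesses satisfiability of $\phi_0$.

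I expect the only real subtlety to be the bookkeeping that the root node actually sits at the \emph{initial} node of some maximal $\epsilon$-path in $\T$, and that $\phi_0$ is retained in its label all the way to the $end$ of that path; this follows because no construction step ever deletes a formula from a label and the $\epsilon$-labeled steps (a)--(g) only add formulas, so $L(\mathit{ini}(\rho_0)) \subseteq L(\mathit{end}(\rho_0)) = L(\rho_0)$. One also needs that $n_0$ is not blocked --- but blocking only occurs in step (k), which is applied to freshly created nodes of the form $n_{(\neg\Kh_i\chi,\Kh_i\psi)}$, never to the root. Since all genuinely hard work (the induction over $\K_i$ and $\Kh_i$ formulas, in particular the use of Proposition~\ref{pro.negKhWillBeInheritedByActionTransitions} for the $\neg\Kh_i$ direction) has already been packed into the Truth Lemma, Theorem~\ref{theo.complete} itself is essentially a one-line corollary once the model $\M^\T$ is exhibited.
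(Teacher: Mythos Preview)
Your proposal is correct and matches the paper's approach exactly: the paper simply states that Theorem~\ref{theo.complete} follows from Lemma~\ref{lemma.forComplete}, and your write-up just spells out the obvious instantiation (pick an open complete subtree, take $\M^\T$, and apply the Truth Lemma at the maximal $\epsilon$-path through the root). The extra bookkeeping you supply about the root being unblocked and labels only growing along $\epsilon$-paths is accurate and harmless.
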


\subsection{Complexity}
In this section, we will show that the procedure of Definition \ref{def.tableaux} runs in polynomial space.

\begin{definition}[Depth]
	The \emph{depth} of a formula, denoted by $\dep{\phi}$, is the depth of nesting of $\K$ or $\Kh$ operators, which is defined in Table \ref{tab.depth}. 
\end{definition}

\begin{table}[htbp]
	{
		$$\begin{array}{rll}
				\dep{\bot}          & = & 0                            \\
				\dep{p}             & = & 0                            \\
				\dep{\neg\psi}      & = & \dep{\psi}                   \\
				\dep{\psi\land\chi} & = & max\{\dep{\psi},\dep{\chi}\} \\
				\dep{\K_i\psi}      & = & \dep{\psi}+1                 \\
				\dep{\Kh_i\psi}     & = & \dep{\psi}+2
			\end{array}$$

	}
	\caption{Depth of formulas}\label{tab.depth}
\end{table}


The reason that $\dep{\Kh_i\psi}>\dep{\K_i\psi}$ is that in the construction steps (j) and (k) we need to label $n$'s child node with the formula $\K_i\psi$ for $\Kh_i\psi\in L(n)$.

\begin{proposition}\label{pro.heightOfTableauTree}
	The height of the tableau tree $\T_{\phi_0}$ is bounded by a polynomial function of the size of the set $\subpl{\phi_0}$.
\end{proposition}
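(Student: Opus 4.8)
The plan is to bound the height of $\T_{\phi_0}$ by exhibiting, for each maximal $\epsilon$-path, a quantity that strictly decreases along every ``non-$\epsilon$'' edge (i.e.\ along the edges created by steps (i), (j), (k)), together with a separate bound on the length of any single maximal $\epsilon$-path. Since the propositional rules (a)--(g) only add subformulas of $\subpl{\phi_0}$ to a node's label without ever removing anything, and each such rule ``checks'' at least one previously unchecked formula, any $\epsilon$-path has length at most $|\subpl{\phi_0}|+1$; so it suffices to bound the number of non-$\epsilon$ edges on any root-to-leaf branch.

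First I would handle the $i$-edges created by step (i): these are controlled by the blocking condition in (i) itself, which forbids creating $n_{\neg\K_i\psi}$ when some $i$-ancestor already carries the label $\Sigma(n,\neg\K_i\psi)$. By Proposition~\ref{pro.equivalenceShareKAndKh} (or more directly, by tracking which $x$-formulas survive), the relevant data on such a path is the restriction $L(\cdot)|\K_i \cup L(\cdot)|\neg\K_i \cup L(\cdot)|\Kh_i \cup L(\cdot)|\neg\Kh_i$, all of which lives inside $\subpl{\phi_0}$; along a chain of $i$-edges this set can only shrink (each step (i) replaces the core by a subset and adds the single new formula $\neg\psi$, which is already a subformula), so the chain has length polynomial in $|\subpl{\phi_0}|$. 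Second, for the action-edges created by steps (j) and (k): here I would use the depth function $\dep{\cdot}$ from Table~\ref{tab.depth}. The point of setting $\dep{\Kh_i\psi}=\dep{\psi}+2$ (as the remark before the proposition explains) is exactly that when step (j) or (k) fires on a node whose label contains $\Kh_i\psi$, the child node is labeled $\{\K_i\psi\}$ or $\{\K_i\psi,\neg\Kh_i\chi\}$, and $\dep{\K_i\psi}=\dep{\psi}+1<\dep{\Kh_i\psi}$, while $\neg\Kh_i\chi$ in the step-(k) case satisfies $\dep{\neg\Kh_i\chi}\le\dep{\Kh_i\psi}$ because $\neg\Kh_i\chi\in\subpl{\phi_0}$ and $\Kh_i\psi$ was obtained from the same ancestor hierarchy — so the maximum depth over the label does not increase across an action-edge, and it strictly decreases whenever the ``witness'' $\Kh_i\psi$ realizing the max is consumed. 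One then argues that between two consecutive strict decreases the number of action-edges is again polynomially bounded (each such edge is tagged by some $a_{\Kh_i\psi}$ with $\Kh_i\psi$ a subformula of $\phi_0$, and the propositional/$i$-segments in between are already bounded), so the total number of action-edges on a branch is bounded by roughly $\dep{\phi_0}$ times a polynomial in $|\subpl{\phi_0}|$.

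Combining the three bounds — $\epsilon$-segments of length $O(|\subpl{\phi_0}|)$, at most polynomially many $i$-edges, and at most polynomially many action-edges per branch — gives a polynomial bound on the height of $\T_{\phi_0}$. The main obstacle I expect is making the interaction between the three kinds of edges precise: in particular, verifying that the decreasing quantities really are monotone across the \emph{other} two kinds of edges (e.g.\ that an action-edge does not resurrect a $\neg\K_i$-formula that would restart an $i$-chain, and that an $i$-edge does not raise the depth), which is where one must lean carefully on the fact that every label is built only from $\subpl{\phi_0}$ via inheritance, plus the specific shapes of the child labels in steps (i), (j), (k). Once that bookkeeping is in place the counting is routine.
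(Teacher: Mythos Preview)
Your overall architecture matches the paper's: bound the length of $\epsilon$-segments, then bound the number of non-$\epsilon$ edges on a branch using the depth function together with blocking. The $\epsilon$-segment bound and the observation that the maximum depth never increases across a non-$\epsilon$ edge are both fine. However, there is a genuine gap in your treatment of step~(k).

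Your key inequality $\dep{\neg\Kh_i\chi}\le\dep{\Kh_i\psi}$ is simply false. Take $\phi_0=\Kh_i p\wedge\neg\Kh_i\K_j q$: at the root closure we have $\Kh_i p$ and $\neg\Kh_i\K_j q$ with $\dep{\Kh_i p}=2<3=\dep{\neg\Kh_i\K_j q}$, and step~(k) produces a child labelled $\{\K_i p,\neg\Kh_i\K_j q\}$ whose maximum depth is still~$3$. So across a (k)-edge the maximum depth need \emph{not} strictly drop, and your claim that ``between two consecutive strict decreases the number of action-edges is polynomially bounded because each edge is tagged by some $a_{\Kh_i\psi}$'' does not follow: edge tags can repeat arbitrarily often along a branch. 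What actually terminates chains of (k)-edges is the blocking clause in step~(k) (``if there is an ancestor $n'$ with $L(n')=L(n_{(\neg\Kh_i\chi,\Kh_i\psi)})$, mark it blocked''), which you never invoke. Since a (k)-child carries a label of the fixed shape $\{\K_i\psi,\neg\Kh_i\chi\}$ and there are at most $m^2$ such pairs in $\subpl{\phi_0}$, at most $m^2$ (k)-edges can occur on any branch before a blocked leaf is produced. This is the missing ingredient, and it is exactly how the paper handles~(k).

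A smaller issue: your (i)-edge argument says the set $L(\cdot)|\K_i\cup L(\cdot)|\neg\K_i\cup L(\cdot)|\Kh_i\cup L(\cdot)|\neg\Kh_i$ ``can only shrink'' along an $i$-chain. Along a chain of edges labelled by the \emph{same} agent $i$ this set in fact stays \emph{equal} (cf.\ Proposition~\ref{pro.equivalenceShareKAndKh}); the bound there comes from the at most $m$ choices of the added $\neg\psi$ together with the blocking test in~(i). The paper then argues separately that once the agent changes along a step-(i) edge the maximum depth strictly drops, which is a point you do not address. Without that, your ``decreasing quantity'' for (i)-chains is not well-defined across agent switches.
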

\begin{proof}
	Let the size of $\subpl{\phi_0}$ be $m$. In the construction of the tableau tree $\T_{\phi_0}$ in Definition \ref{def.tableaux}, each step of (a)-(k) might add the height of the tree with 1 degree.
	The steps (a)-(g) can be consecutively executed at most $m$ times to get a $sub^+$-formula closure.

	Now consider a $\T_{\phi_0}$ path that starts from the root node, and we contract it by seeing the consecutive nodes whose edges are labeled $\epsilon$ as one single node. Let $n$ be a node in this contracted path $b$.

	If the node $n$'s child node is added by the step (j), then the greatest depth of formulas labeled on the child node is strictly less than the greatest depth of formulas labeled on $n$.

	If the node $n$'s child node is added by the step (i), although the greatest depth of formulas labeled on the child node might be the same as the greatest depth of formulas labeled on $n$, but, such descendant nodes with the same greatest depth with $n$ can be consecutively added by the step (i) at most $m$ times to achieve a node whose $K_i$ ancestor has the same labeled formulas. After that, if a descendant node $n'$ is added by the step (i) again, then it must be a $\K_j$ descendant where $i\neq j$. 
	Thus, the greatest depth will be strictly shrunk.
	If the descendant node $n'$ is added by the step (k), the greatest depth might still keep the same with $n$. However, the step (k) can be executed on one path at most $m^2$ times before it adds a blocked node.

	Hence,  there are at most $m^4$ consecutive nodes in $b$ that have the same greatest depth before the greatest depth becomes strictly less.
	Therefore, the length of the contracted path $b$ is at most $m^5$, and the length of the original path is at most $m^6$. It follows that the height of $\T_{\phi_0}$ is bounded by $m^6$.
\end{proof}

\begin{lemma}\label{lemma.upperBound}
	There is an algorithm that runs in polynomial space for deciding whether $\T_{\phi_0}$ is closed.
\end{lemma}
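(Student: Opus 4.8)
The plan is to reduce the question ``is $\T_{\phi_0}$ closed?'' to the evaluation of a finite and-or tree of polynomial height, and to evaluate that tree by a depth-first search that keeps only a single root-to-node branch in memory at a time.

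First I would unfold the relevant definitions. By the definition of a closed tableau, $\T_{\phi_0}$ is closed iff every complete subtree contains a blatantly inconsistent node; equivalently, $\T_{\phi_0}$ is \emph{not} closed iff some complete subtree is open. So I would define, for each node $n$ of $\T_{\phi_0}$, a Boolean $\mathrm{OK}(n)$ meaning ``there is a subtree $S \subseteq \T_{\phi_0}$ rooted at $n$ that is open and that is complete relative to $n$'' (i.e.\ $S$ contains $n$, contains all $\T_{\phi_0}$-children of each of its ``and''-nodes, and contains at least one child of each of its ``or''-nodes). Unwinding these conditions yields the recursion: $\mathrm{OK}(n)$ is false whenever $L(n)$ is blatantly inconsistent; otherwise it is true when $n$ is a leaf (in particular a blocked leaf); it equals $\bigvee\{\mathrm{OK}(c)\mid c\text{ a child of }n\}$ when $n$ is an ``or''-node; and it equals $\bigwedge\{\mathrm{OK}(c)\mid c\text{ a child of }n\}$ when $n$ is an ``and''-node. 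Then $\T_{\phi_0}$ is not closed iff $\mathrm{OK}(n_0)$ holds, so it suffices to evaluate $\mathrm{OK}(n_0)$ in polynomial space.

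Next I would describe the algorithm: a recursive procedure that computes $\mathrm{OK}(n)$ by generating the children of $n$ one at a time, recursing on each, and short-circuiting as soon as the value of the governing $\vee$/$\wedge$ is fixed. The only data kept across recursive calls is the current branch from $n_0$ down to the node under examination; for each node on that branch we store its label $L(n) \subseteq \subpl{\phi_0}$, the ``checked''-marks on its formulas, the label of the entering edge ($\epsilon$, an agent $i$, or an action $a_{\Kh_i\psi}$), and the index of the child currently being expanded together with the partial Boolean accumulated so far. The key observation making this work is that the construction of Definition \ref{def.tableaux} is \emph{local along a branch}: which of the rules (a)--(k) fires at $n$, and what children it produces, is determined by $L(n)$ together with the check-marks and the sequence of ancestors of $n$---the ancestors being consulted only for the $i$-ancestor test in step (i) and the same-label-ancestor (blocking) test in step (k), both of which merely scan the stored branch. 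Hence the children of the current node, and its ``and''/``or'' status, can always be recomputed from what is on the stack, and $\T_{\phi_0}$ is never materialized in full. (Note also that each node has at most polynomially many children: step (1) adds at most three, and step (2) adds one child per $\neg\K_i$-formula, per suitable $(\Kh_i,\neg\K_i)$-pair, and per suitable $(\neg\Kh_i,\Kh_i,\neg\K_i)$-triple of $L(n)$, all bounded by a polynomial in $|\subpl{\phi_0}|$.)

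Finally, the space analysis. By Proposition \ref{pro.heightOfTableauTree} every branch of $\T_{\phi_0}$ has length bounded by a polynomial in $|\subpl{\phi_0}|$, hence in $|\phi_0|$, so the recursion depth---and therefore the number of frames on the stack---is polynomial. Each frame stores a subset of the polynomial-size set $\subpl{\phi_0}$, a few marks, one edge symbol, one child index, and one bit, so each frame has polynomial size; hence the whole stack is of polynomial size and the algorithm runs in \pspace, deciding whether $\T_{\phi_0}$ is closed by returning the negation of $\mathrm{OK}(n_0)$. (Equivalently, one may observe that this recursion is an alternating polynomial-time computation and invoke $\mathrm{APTIME}=\pspace$.) The step I expect to require the most care is the locality claim---verifying that every branching decision and every side-condition of rules (a)--(k) depends only on the current branch and never on sibling subtrees---since that is precisely what licenses the single-branch, depth-first traversal; the polynomial height itself is already supplied by Proposition \ref{pro.heightOfTableauTree}.
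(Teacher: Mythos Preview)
Your proposal is correct and follows essentially the same approach as the paper: bound the branching factor of each node polynomially, invoke Proposition~\ref{pro.heightOfTableauTree} for the polynomial height, and then evaluate the and-or tree by a depth-first traversal that stores only one branch at a time. The paper's own proof is terser---it just notes that a node's mark depends only on its label and its children's marks and cites \cite{halpern_guide_1992} for the depth-first-search-in-\pspace\ argument---whereas you make explicit the locality of child generation along a branch and the contents of each stack frame, but the underlying argument is the same.
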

\begin{proof}
	Let the size of $\subpl{\phi_0}$ be $m$.
	In the construction step (i), at most $m$ successors are added. In the step (j), at most $m^2$ successors are added, and in the step (k) at most $m^3$ successors are added. This means that each node has at most $m+m^2+m^3$ child nodes.  Therefore, the tableau tree $\T_{\phi_0}$ is an  $m+m^2+m^3$-ary tree. By Proposition \ref{pro.heightOfTableauTree}, we know that the height of the tree is bounded by $m^6$.

	We can mark the node to check whether the tree is closed.
	Note that how a node is marked can be completely determined by its label and how its successors are marked. Once we have determined how  a node is marked, we never have to consider the subtree below that node again.
	Thus, a depth-first search of the tree that runs in polynomial space can decide whether the tree is closed \cite{halpern_guide_1992}.
\end{proof}

Since the satisfiability problem of epistemic logic with no less than 2 agents is \pspace-hard (see \cite{halpern_guide_1992}), and it is a fragment of $\ELKh_n$, it follows that the satisfiability problem of $\ELKh_n$ is also \pspace-hard. Together with Lemma \ref{lemma.upperBound},
we have the following result.

\begin{theorem}
	The problem of the satisfiability of $\ELKh_n$ formulas is \pspace -complete.
\end{theorem}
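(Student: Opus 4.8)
The plan is to sandwich the problem between a \pspace{} lower bound and a \pspace{} upper bound, the latter being essentially a corollary of the tableau machinery developed above.

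First I would establish \pspace-hardness. The key observation is that the $\Kh$-free formulas of $\Lan_{\ELKh_n}$ are exactly the language of multi-agent epistemic logic $S5_n$: forgetting the $\ACT_i$ and $R_a$ components turns an $\ELKh_n$-model into a standard $S5_n$-model, and, conversely, any $S5_n$-model can be equipped with (arbitrary, e.g.\ empty) action components to become an $\ELKh_n$-model; moreover, on $\Kh$-free formulas the satisfaction clauses of the $\ELKh_n$ semantics coincide literally with the usual relational ones, since the $\Kh_i$ clause is simply never invoked. Hence a $\Kh$-free formula is $\ELKh_n$-satisfiable iff it is $S5_n$-satisfiable, so the identity map is a (trivially polynomial) reduction from $S5_n$-satisfiability to $\ELKh_n$-satisfiability. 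As $|\Ag|\geq 2$ and $S5_n$-satisfiability is \pspace-hard for at least two agents \cite{halpern_guide_1992}, $\ELKh_n$-satisfiability is \pspace-hard.

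Next I would establish membership in \pspace. Given an input $\phi_0$, the algorithm conceptually builds the tableau $\T_{\phi_0}$ and tests whether it is closed. By the Soundness Theorem~\ref{theo.sound} and the Completeness Theorem~\ref{theo.complete}, $\phi_0$ is satisfiable iff $\T_{\phi_0}$ is not closed, and by Lemma~\ref{lemma.upperBound} this test is performed by an algorithm using only polynomial space (the key ingredient being the polynomial height bound of Proposition~\ref{pro.heightOfTableauTree}, which lets a depth-first traversal keep only one root-to-node branch, together with the marking of its children, in memory at a time). Chaining these facts yields a polynomial-space decision procedure for $\ELKh_n$-satisfiability. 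Finally, a problem that is both \pspace-hard and decidable in polynomial space is \pspace-complete, which is the claim.

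I do not anticipate any real obstacle here: soundness, completeness, the height bound, and the polynomial-space search are all already proved. The only point needing a moment's care is making the fragment claim precise --- namely that the $\Kh$-free fragment of $\ELKh_n$ is $S5_n$ exactly, with no hidden extra constraints --- so that the hardness transfer is faithful; this is immediate from inspecting the semantics.
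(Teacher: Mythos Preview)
Your proposal is correct and matches the paper's own argument essentially verbatim: the paper derives the lower bound from the \pspace-hardness of multi-agent epistemic logic (via the $\Kh$-free fragment being $S5_n$) and the upper bound from Lemma~\ref{lemma.upperBound} together with soundness and completeness. Your additional remark about why the fragment embedding is faithful is a nice bit of care that the paper leaves implicit.
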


\section{Conclusion}
This paper presented a tableau procedure for the multi-agent version of the logic of strategically \emph{knowing how}.
The tableau method presented in this paper is developed from the tableau method for epistemic logic in \cite{halpern_guide_1992} and the tableau method for \emph{knowing how} logic via simple plans \cite{Li21tableau}.
This paper showed the soundness, the completeness, and the complexity of this tableau procedure.
Since the procedure runs in polynomial space, it follows that the satisfiability problem of the logic of strategically \emph{knowing how} can be decided in \pspace.
Moreover, since the \emph{knowing how} logic based on PDL-style knowledge-based plans over finite models in \cite{liWang2021AIJ} is the same as the logic of strategically \emph{knowing how}, it means that the satisfiability problem of that logic also can be decided in \pspace.

\paragraph{Acknowledgement}

The author is grateful to the anonymous reviewers of TARK-2023. Their detailed comments helped the author to improve the presentation of the paper. The author also thanks the support from the National Social Science Foundation of China (Grant No. 18CZX062).

\bibliographystyle{eptcs}
\bibliography{generic}
\end{document}